\documentclass[11pt,letterpaper]{article}

\usepackage[margin=1in]{geometry}

\usepackage{setspace}
\usepackage{parskip}
\usepackage{microtype}

\usepackage{natbib}

\usepackage{algorithm, algorithmicx, algpseudocode}
\usepackage{amssymb}
\usepackage{amsmath}
\usepackage{graphicx}
\usepackage{mathtools}
\usepackage{needspace} 
\usepackage{multicol}
\usepackage{listings}
\usepackage{amsthm}
\usepackage[shortlabels]{enumitem}
\usepackage[italicdiff]{physics}
\usepackage{cancel}
\usepackage[dvipsnames]{xcolor}
\usepackage{verbatim}
\usepackage{comment}
\usepackage{array}
\usepackage{tikz-cd}
\usepackage{import}
\usepackage{booktabs}
\usepackage{comment}
\usepackage{array}
\usepackage[colorlinks, linktoc=page, linkcolor=MidnightBlue, citecolor=blue]{hyperref}
\usepackage{forest}
\usepackage{float}
\usepackage{xspace}
\usepackage{aliascnt}
\usepackage{wrapfig}
\usepackage{thm-restate}
\usepackage{adjustbox}
\usepackage{bm}
\usepackage{bbm}
\usepackage{complexity}
\usepackage{makecell}
\usepackage{nicefrac}
\usetikzlibrary{calc, positioning, external, arrows.meta, patterns, decorations.pathreplacing}
\usepackage{cleveref}
\allowdisplaybreaks
\usepackage{booktabs}
\usepackage{pifont}

\newcommand{\cmark}{\ding{51}}
\newcommand{\xmark}{\ding{55}}

\let\cite\citep

\newfloat{protocol}{tbp}{lop}
\floatname{protocol}{Protocol}

\newtheorem{theorem}{Theorem}
\numberwithin{theorem}{section}
\newtheorem*{theorem*}{Theorem}

\newtheorem{lemma}[theorem]{Lemma}

\theoremstyle{definition}
\newtheorem{definition}[theorem]{Definition}

\usepackage[most]{tcolorbox}

\usepackage{caption}
\makeatletter
\renewcommand\thanks[1]{%
  \footnotemark%
  \protected@xdef\@thanks{\@thanks
    \protect\footnotetext[\the\c@footnote]{#1}}%
}
\makeatother

\title{Decoupling Corruption and Horizon in Robust Contextual Pricing}

\author{
Matteo Castiglioni\thanks{The authors are listed in alphabetical order.}\\
Politecnico di Milano\\
\texttt{matteo.castiglioni@polimi.it}
\and
Francesco Emanuele Stradi\footnotemark[1]\\
Politecnico di Milano\\
\texttt{francescoemanuele.stradi@polimi.it}
}

\date{} %

\begin{document}

\maketitle

\begin{abstract}\noindent
We study robust repeated contextual pricing, where valuations depends linearly on the features. At each round $t\in[T]$, a seller observes a context, posts a price, and receives only a possibly corrupted binary sale feedback. The seller knows an upper bound $C$ on the number of corrupted rounds. We design an
algorithm with regret $\mathcal O(Cd+d^2\log T)$, where $d$ is the context dimension. This is the first guarantee for robust contextual pricing that separates the dependence on the corruption budget $C$ from the horizon $T$, closing the problem left open by Gupta, Guruganesh, Paes Leme, and Schneider (2025).

\end{abstract}

\newpage
\tableofcontents
\newpage

\section{Introduction}
In dynamic pricing~\citep{kleinberg2003value}, a seller repeatedly interacts with arriving buyers,
posts prices, and observes only whether a sale occurred. 

We study the \emph{contextual} variant of this problem, where valuations depend linearly on
observable features. Formally, there is an unknown parameter $\theta^\star$, and at each round $t$, a context $u_t$ is revealed (chosen potentially adversarially). The context deterministically determines the buyer's
valuation $v_t = \langle \theta^\star, u_t \rangle$. After the seller posts a price $p_t$, the
only feedback is a binary sale indicator. The regret is measured against the clairvoyant benchmark
that knows $\theta^\star$ and posts $v_t$ in every round:
\[
    R_T
    =
    \sum_{t=1}^T
    \left(
        v_t
        -
        p_t\mathbb I\{p_t\leq v_t\}
    \right).
\]
This model captures feature-based pricing and is closely related to contextual search, where a
learner must locate an unknown linear threshold from one-bit
comparisons~\citep{cohen2020feature,lobel2018multidimensional,leme2022contextual,liu2021optimal}.
Both problems are now well understood in the noise-free setting: geometric algorithms based on
maintaining and cutting a feasible parameter set achieve nearly optimal regret~\citep{lobel2018multidimensional,leme2022contextual,liu2021optimal}, with the optimal
horizon dependence for contextual pricing settled at $\Theta(d \log\log T)$~\citep{liu2021optimal}.

These algorithms are, however, fundamentally fragile. They rely on the assumption that every sale bit faithfully reflects the buyer's true valuation. A natural robustness question is: what happens when an adversary can corrupt the feedback in up to $C$
rounds, modeling, for example, buyers who act irrationally or strategically?

Prior work on robust contextual pricing~\citep{krishnamurthy2020corrupted,gupta2025robust} has
made significant progress, but all existing bounds couple the corruption term and the horizon
\emph{multiplicatively}. The current state of the art, due to~\cite{gupta2025robust}, achieves
$\mathcal{O}(Cd\log\log T)$, and their lower bound rules out a fully additive $\mathcal{O}(C +
d\log\log T)$ guarantee. This  leaves open the following question:
\begin{quote}
    \emph{Can we design robust contextual pricing algorithms with regret
    $\mathcal{O}((C  + \log T )\cdot \poly(d))$?}
\end{quote}
We answer this question affirmatively, closing the gap left open
by~\citep{gupta2025robust,kalupahana2026optimalregretrobustpricing}. In doing so, we also show
that density-based methods --- developed for robust contextual search --- can be extended and adapted to the pricing setting
with the right additional ingredients, answering a question left open by \citep{leme2026density}.

\subsection{Our Result and Techniques}

Our main contribution is the design of an algorithm whose regret bound decouples the dependence on $C$ and $\log T$. Specifically, we prove the following.

\begin{theorem*}[\Cref{thm:standard-regret}]
There exists an algorithm that achieves regret:
\[
    R_T=\mathcal O\left(Cd+d^2\log T\right).
\]
\end{theorem*}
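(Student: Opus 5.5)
We plan to build the algorithm on the density-based approach to robust contextual search of \citet{leme2026density}, adapted to pricing. We maintain a nonnegative weight (density) function $w_t$ over the initial parameter set $K_0=\{\theta:\|\theta\|\le 1\}$, starting from the uniform density. Each round splits into an \emph{exploit} or an \emph{explore} step. Given context $u_t$, compute the interval of plausible values $\langle\theta,u_t\rangle$ over the \emph{heavy} region, i.e.\ the set of $\theta$ whose density is not too small relative to its initial value. If this interval has width at most $1/T$, post its lower end. Otherwise, post a price $p_t$ at a weighted quantile (median-like) of $\langle\theta,u_t\rangle$ under $w_t$. After the binary feedback, multiply by a factor $\gamma<1$ the density of all $\theta$ inconsistent with the observed bit. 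Since $\theta^\star$'s density decreases only in corrupted rounds, after at most $C$ corruptions it keeps density at least $\gamma^{C}$ times its initial value. The threshold for ``heavy'' is set so that $\theta^\star$ always stays in the heavy region.

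The analysis has two parts. \emph{(i) Exploit rounds.} Because $\theta^\star$ is in the heavy region, the lower end of the interval is at most $v_t$, so a sale occurs whenever the feedback is uncorrupted and the loss is at most $1/T$. Corrupted rounds cost at most $1$ each. This contributes $\mathcal O(C+1)$ in total. \emph{(ii) Explore rounds.} Here we use a potential argument. Take $\Phi_t=\log\int w_t$ and compare it with the lower bound $\Phi_t\ge \log\int_{B}w_t$, where $B$ is a small ball of radius $\sim 1/T$ around $\theta^\star$. The density on $B$ loses at most a factor $\gamma^{C}$ overall and $B$ has volume $T^{-d}$, so $\Phi$ can fall by at most $\mathcal O(C+d\log T)$ below its start. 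In each explore round, the quantile price guarantees that the wrong side carries a constant fraction of the mass. An uncorrupted bit therefore decreases $\Phi$ by at least a constant. Hence the number of explore rounds is $\mathcal O(C+d\log T)$ up to factors from $\gamma$, and each costs at most $1$.

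The factor $d$ in front of $C$, and the $d^2$, enter through how we make the ``constant-fraction cut'' hold. In a width-large direction, a cut through the weighted median only certifies a fraction of roughly $1/d$ of the mass that is far from the hyperplane, which is needed for robustness to $1/T$ perturbations. We handle this with a Grünwald-type centroid cut applied to the heavy region: the log-concavity of densities obtained by multiplicative halfspace reweighting fails, so we instead intersect with the heavy level set and use its convexity. Choosing $\gamma=1-\Theta(1/d)$ then makes each uncorrupted explore round shrink the potential by $\Omega(1/d)$. A corruption can increase the distance of $\theta^\star$'s density from its initial value by only $\mathcal O(1)$ in log-scale, which is $\mathcal O(d)$ rounds' worth of progress. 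This yields the accounting
\[
\#\text{explore}\;\lesssim\; d\bigl(C\cdot\mathcal O(1)+d\log T\bigr)=\mathcal O(Cd+d^2\log T).
\]

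\textbf{Main obstacle.} The hard step is the pricing-specific issue: in an explore round the price must not be too far below $v_t$. Otherwise we also suffer the revenue loss $v_t-p_t$ in rounds that look like sales. The plan is to show that when the heavy interval is wide we count the round as explore and pay at most $1$, and when it is narrow we exploit. The delicate part is proving that the heavy region's width is monotone enough that exploit losses total $\mathcal O(1)$ plus the number of corruptions. This should follow because any exploit round with a corruption-induced sale error is charged to $C$. I expect the convexity and log-concavity argument for the weighted cut to be where most care is needed.
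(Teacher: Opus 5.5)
\textbf{Verdict: there is a genuine gap.} The exploit rule is fine, but your potential argument does not bound the number of explore rounds, and you have no mechanism for rounds where the posted price lands close to $v_t$.

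\textbf{What works.} Your exploit step is sound. $\theta^\star$ is never penalized on uncorrupted rounds, so it stays heavy, and the lower end of a width-$1/T$ heavy interval is a safe price. Corruption only changes what you observe, not whether the sale happens, so exploit rounds cost at most $1/T$ each with no $C$ term.

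\textbf{Where it breaks.} The gap is in bounding the explore rounds. Your lower bound on $\int w_t$ rests on the claim that the density on a ball $B$ of radius $\sim 1/T$ around $\theta^\star$ loses at most a factor $\gamma^{C}$. For your update this is false. You penalize every $\theta$ on the wrong side of $\{\langle\theta,u_t\rangle=p_t\}$ with no protected window. So whenever the quantile price lands within $1/T$ of $v_t$, part of $B$ (up to about half of it) is penalized on an \emph{uncorrupted} round. Your explore condition, that the heavy region is wide in direction $u_t$, says nothing about where $p_t$ sits relative to $v_t$. Nothing in your argument rules out many rounds in which the median lands just above $v_t$. Each such round costs about $v_t$ (no sale) and erodes the mass near $\theta^\star$. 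You then have no lower bound on $\int w_t$, and the total-mass potential does not bound the number of explore rounds. This is exactly the pricing-specific obstacle for density methods that the paper describes.

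\textbf{Two further steps that fail as stated.}
\begin{itemize}
\item The heavy set $\{\theta:\text{at most } C \text{ inconsistent bits}\}$ is a union of polytopes and is not convex once $C\ge 1$. So no Gr\"unbaum-type centroid cut on it is available.
\item With $\gamma=1-\Theta(1/d)$, a corruption costs $\theta^\star$ only $\Theta(1/d)$ in log-scale, not $\mathcal O(1)$. Your account of where the $Cd$ term comes from does not match your own parameters.
\end{itemize}

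\textbf{How the paper gets around this.} It separates the two kinds of explore rounds instead of asking one potential to cover both.
\begin{itemize}
\item It posts the $\epsilon$-window median and leaves the density unchanged inside the window. Hence $\mathcal B_d(\theta^\star,\epsilon/2)$ is never penalized on uncorrupted rounds. The density potential then yields $\mathcal O(C+d\log T)$, but only for rounds where the median is far from $v_t$. Close-median rounds make no density progress at all.
\item Close-median rounds are charged to a second mechanism. Every explore round also stores $(u_t,m_t)$ in one of $2\bar C+1$ disjoint evidence sets that fails the approximate-span test \eqref{eq:explains} for $u_t$.
\item A max-determinant (volume) argument shows each evidence set absorbs only $\mathcal O(d)$ such insertions.
\item Once every set spans $u_t$, the algorithm commits to the $(\bar C+1)$-st largest lower endpoint. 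This price is safe provided at most $\bar C$ sets are incorrect.
\end{itemize}
Without proximity feedback, far-median samples are stored as well, so one takes $\bar C=\mathcal O(C+d\log T)$. The count $(2\bar C+1)\cdot\mathcal O(d)$ is where $Cd+d^2\log T$ comes from. Your plan has no counterpart to this evidence-and-commit mechanism for close-price rounds, and that is the missing idea.
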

This is the first regret bound for robust contextual pricing in which the corruption term $Cd$ and the horizon term $d^2\log T$ appear \emph{additively}. This improves over the state of the art for $C=\omega(\log T)$. 

Our work is inspired by two works that provide partial solutions to our problem. On one hand, \cite{kalupahana2026optimalregretrobustpricing} shows that in the \emph{non-contextual} setting, the corruption and horizon \emph{can} be decoupled, 
achieving $\mathcal{O}(C + \log T)$ via a robust binary search paired with a commit 
procedure.   However, their approach is inherently one-dimensional (see \Cref{sec:related} for more details). Despite that, we will use the idea of a committing procedure. 

On the other hand, \cite{pmlr-v178-leme22a,leme2026density} develop density-based algorithms for robust 
\emph{contextual} search that handle changing contexts gracefully, achieving 
$\mathcal{O}(C + d\log T)$ regret. However, density-based methods do not directly yield a safe pricing
rule. A pricing algorithm must eventually post a price slightly \emph{below} the valuation, while the density-based methods only guarantee to post a price \emph{close} to it (see \Cref{sec:related} for more details).

To convey the main ideas, we first
present the algorithm in a more informative \emph{proximity-feedback} model, then we show how to
recover the similar guarantees using only standard sale feedback.

\paragraph{Proximity Feedback}

We isolate the core mechanism in an idealized \emph{proximity-feedback} model where, in addition to the sale bit, the seller also learns whether the posted price was within $\epsilon$ of the true valuation. This
auxiliary model strips away the bookkeeping difficulty of distinguishing ``far'' and ``close'' queries,
and lets us focus on the central question: how can information gathered from \emph{different} contexts be reused to price a new context accurately and hence commit?

Our algorithm maintains two types of variables: a \emph{density} $\mu_t$ over the set of possible $\theta$ parameters, and $2C+1$ disjoint \emph{evidence sets} $\mathcal{H}_1,\ldots,\mathcal{H}_{2C+1}$,
each storing samples $(u_i, p_i)$ for which the proximity feedback confirmed $|\langle\theta^\star,
u_i\rangle - p_i| \leq \epsilon$.

When a new context $u_t$ arrives, the algorithm decides between committing and exploring.
The algorithm \emph{commits} when every evidence set can \emph{predict} $u_t$. Formally, we require that $u_t$ is approximately a linear combination of contexts already stored in that set,
so that the corresponding prices can be combined into a reliable estimate of
$\langle\theta^\star, u_t\rangle$. 
In this case, each evidence set $\mathcal{H}_j$
produces a confidence interval $I_j(u_t)$ for the true valuation, and the algorithm
posts the $(C+1)$-st largest lower endpoint across all $2C+1$ intervals. Since the
evidence sets are disjoint, at most $C$ can contain a corrupted observation. Using $2C+1$ sets, we discard the $C$ potentially corrupted samples --- regardless of whether corruption overestimates or underestimates their prediction --- and guarantee
a posted price that is both below the true valuation and close to it,
incurring negligible regret.

Otherwise, the algorithm \emph{explores} by posting the $\epsilon$-window median $m_t$ of $\mu_t$
in direction $u_t$ --- a generalization of the median that maintains a buffer of width $\epsilon$
around the cut point. Two separate potential arguments bound the number of exploration rounds.


\emph{Density Potential.} When the feedback indicates that $m_t$ is far from $v_t$, the algorithm
updates $\mu_t$ as in \cite{pmlr-v178-leme22a,leme2026density}. If the value is above the median, the upper tail
$\{\theta:\langle u_t,\theta\rangle\geq m_t+\epsilon/2\}$ is multiplied by $3/2$, the lower tail
is multiplied by $1/2$, and the $\epsilon$-window around $m_t$ is left unchanged; if the value is
below the median, the two tails are swapped.
The $\epsilon$-window median makes this update stable. The two tails have equal density mass, so
promoting one and demoting the other keeps the total mass well defined. Moreover, when $m_t$ is
already close to $v_t$, the region near $\theta^\star$ lies in the unchanged window and is not
penalized. Hence, on uncorrupted rounds, the density near $\theta^\star$ either increases by a constant
factor or does not decrease. Corrupted rounds can undo only a bounded amount of this progress,
yielding an $\mathcal O(C+d\log T)$ bound on the number of far-median exploration rounds.

\emph{Determinant Growth.} When the proximity feedback indicates $m_t$ is close to $v_t$, the pair $(u_t, m_t)$ is added to an evidence set that cannot yet predict $u_t$.
We show that each evidence set can receive at most $\mathcal{O}(d)$ such insertions via a different potential argument.
Intuitively, every context that the set fails to predict must contribute a new geometric direction, and the volume spanned by stored contexts can grow by a multiplicative factor only $\mathcal{O}(d)$ times before predicting the whole set of possible contexts. With $2C+1$ evidence sets, the total number of exploration rounds in which $m_t$ is close to $v_t$ is $\mathcal{O}((C+1)d)$.

In Figure~\ref{fig:augmented-geometric-view}, we provide a graphical representation of the possible algorithm decision. 
Together, these two arguments give a proximity-feedback regret bound of $\mathcal{O}(Cd + d\log T)$.

\begin{figure*}[!t]
    \centering
    \resizebox{\textwidth}{!}{
\begin{tikzpicture}[
    every node/.style={font=\small},
    arr/.style={-{Latex[length=2mm]},thick},
    axis/.style={->,black!35},
    level/.style={dashed,thick,black!55},
    paneltext/.style={font=\small,align=center}
]

\begin{scope}[xshift=0cm]
    \node[font=\bfseries] at (0,2.85) {Density update};

    \draw[axis] (-2.25,-2.25) -- (2.35,-2.25) node[right] {$\theta_1$};
    \draw[axis] (-2.25,-2.25) -- (-2.25,2.35) node[above] {$\theta_2$};

    \fill[gray!6] (0,0) circle (2);
    \draw[thick] (0,0) circle (2);
    \node[anchor=north east] at (-1.30,-1.42) {$\Theta$};

    \begin{scope}
        \clip (0,0) circle (2);
        \fill[gray!18] (-2.3,1.73) -- (2.3,-1.03) -- (2.3,-0.43) -- (-2.3,2.33) -- cycle;
        \fill[blue!10] (-2.3,2.33) -- (2.3,-0.43) -- (2.3,2.3) -- (-2.3,2.3) -- cycle;
    \end{scope}
    \draw[level] (-1.95,1.52) -- (1.95,-0.82);
    \draw[level] (-1.95,2.12) -- (1.95,-0.22);

    \draw[arr] (-1.60,-1.25) -- (-0.95,-0.15);
    \node[anchor=east] at (-1.68,-1.25) {$u_t$};

    \foreach \x/\y in {
        -1.45/-1.10, -1.05/-1.35, -0.65/-1.05, -0.25/-1.45,
         0.20/-1.25,  0.65/-1.40,  1.10/-1.05, -1.30/-0.55,
        -0.90/-0.70, -0.40/-0.55, 0.05/-0.65
    }{
        \fill[blue!25] (\x,\y) circle (0.035);
    }

    \foreach \x/\y in {
         -0.35/1.45, 0.10/1.60, 0.55/1.35,
         0.95/1.55, 1.35/1.15, 1.50/0.70, 0.75/0.95
    }{
        \fill[blue!75] (\x,\y) circle (0.065);
    }

    \fill[red!80] (0.65,0.95) circle (0.07);
    \node[anchor=west] at (0.80,0.98) {$\theta^\star$};

    \draw[arr,blue!70] (-0.25,-1.55) to[bend right=18] (0.90,1.35);

    \node at (0,-3.05) {$m_t$ outside the $\epsilon$-window};
\end{scope}

\begin{scope}[xshift=5.35cm]
    \node[font=\bfseries] at (0,2.85) {Evidence collection};

    \draw[axis] (-2.25,-2.25) -- (2.35,-2.25) node[right] {$\theta_1$};
    \draw[axis] (-2.25,-2.25) -- (-2.25,2.35) node[above] {$\theta_2$};

    \fill[gray!6] (0,0) circle (2);
    \draw[thick] (0,0) circle (2);
    \node[anchor=north east] at (-1.30,-1.42) {$\Theta$};

    \begin{scope}
        \clip (0,0) circle (2);
        \fill[green!18] (-2.3,1.73) -- (2.3,-1.03) -- (2.3,-0.43) -- (-2.3,2.33) -- cycle;
    \end{scope}
    \draw[level,green!45!black] (-1.95,1.52) -- (1.95,-0.82);
    \draw[level,green!45!black] (-1.95,2.12) -- (1.95,-0.22);

    \draw[arr] (-1.60,-1.25) -- (-0.95,-0.15);
    \node[anchor=east] at (-1.68,-1.25) {$u_t$};

    \foreach \x/\y in {
        -1.35/-1.05, -0.95/-1.25, -0.55/-0.90, -0.15/-1.15,
         0.30/-0.95,  0.75/-1.15, 1.20/-0.75,
        -1.10/0.05, -0.70/0.35, -0.25/0.55, 0.25/0.75,
         0.75/0.90, 1.25/0.95,
        -0.65/1.35, -0.10/1.55, 0.45/1.45
    }{
        \fill[black!55] (\x,\y) circle (0.04);
    }

    \fill[red!80] (0.35,0.65) circle (0.07);
    \node[anchor=west] at (0.50,0.68) {$\theta^\star$};

    \node at (0,-3.05)
        {store in $\mathcal H_j$ $(m_t,u_t)$, $m_t\simeq \langle\theta^\star,u_t\rangle$};
\end{scope}

\begin{scope}[xshift=10.75cm]
    \node[font=\bfseries] at (0,2.85) {Commit};

    \draw[->,black!70,thick] (-2.35,-2.25) -- (2.35,-2.25)
        node[right] {$\langle\theta^\star,u_t\rangle$};

    \draw[dashed,black!65,thick] (0,-2.40) -- (0,2.20);
    \node[below] at (0,-2.40) {$v_t$};

    \fill[red!8,rounded corners] (0.20,1.20) rectangle (1.85,2.10);
    \fill[green!12,rounded corners] (-0.85,0.20) rectangle (0.85,1.05);
    \fill[red!8,rounded corners] (-1.85,-0.85) rectangle (-0.15,0.05);

    \node[red!65!black,font=\scriptsize] at (1.05,2.32) {$\leq C$ incorrect};
    \node[black!70,font=\scriptsize] at (1.25,0.65) {correct};
    \node[red!65!black,font=\scriptsize] at (-1.05,-1.08) {$\leq C$ incorrect};

    \draw[red!65,dashed,very thick] (0.55,1.82) -- (1.55,1.82);
    \draw[red!65,thick] (0.55,1.68) -- (0.55,1.96);
    \draw[red!65,thick] (1.55,1.68) -- (1.55,1.96);

    \draw[red!65,dashed,very thick] (0.35,1.45) -- (1.30,1.45);
    \draw[red!65,thick] (0.35,1.31) -- (0.35,1.59);
    \draw[red!65,thick] (1.30,1.31) -- (1.30,1.59);

    \draw[green!50!black,very thick] (-0.58,0.62) -- (0.42,0.62);
    \draw[green!50!black,thick] (-0.58,0.46) -- (-0.58,0.78);
    \draw[green!50!black,thick] (0.42,0.46) -- (0.42,0.78);

    \draw[black!65,rounded corners,thick]
        (-0.70,0.38) rectangle (0.54,0.86);

    \draw[red!65,dashed,very thick] (-1.60,-0.30) -- (-0.45,-0.30);
    \draw[red!65,thick] (-1.60,-0.44) -- (-1.60,-0.16);
    \draw[red!65,thick] (-0.45,-0.44) -- (-0.45,-0.16);

    \draw[red!65,dashed,very thick] (-1.35,-0.65) -- (-0.25,-0.65);
    \draw[red!65,thick] (-1.35,-0.79) -- (-1.35,-0.51);
    \draw[red!65,thick] (-0.25,-0.79) -- (-0.25,-0.51);

    \draw[black!80,very thick] (-0.58,-2.40) -- (-0.58,-2.10);
    \node[below] at (-0.58,-2.40) {$p_t$};

    \foreach \x/\y in {
        0.55/1.82,
        0.35/1.45,
        -0.58/0.62,
        -1.60/-0.30,
        -1.35/-0.65
    }{
        \fill[black!70] (\x,\y) circle (0.032);
    }

    \node at (0,-3.05) {post a safe lower endpoint};
\end{scope}

\end{tikzpicture}
}
    \caption{Geometric view of the proximity-feedback algorithm. The dashed parallel lines delimit the $\epsilon$-window around the median price $m_t$ in the direction $u_t$. \emph{Left:} when
$\theta^\star$ is outside this window, the
density is updated toward the side containing $\theta^\star$. \emph{Middle:} when $\theta^\star$ is inside the
window, $m_t$ is a near-value observation and the context-price pair is stored in an evidence set.
\emph{Right:} once the $2C+1$ evidence sets explain the context, their value intervals determine a
conservative lower endpoint, which is posted as the commit price.}
    \label{fig:augmented-geometric-view}
\end{figure*}

\paragraph{From Proximity Feedback to Sale Feedback}

Without proximity feedback, the algorithm cannot observe whether the posted price was close to
or far from the valuation, and hence cannot route each round to the appropriate update. We resolve
this with an optimistic approach: in every exploration round, the algorithm \emph{simultaneously}
inserts the posted price into an evidence set \emph{and} updates the density, hedging against both
possibilities.

The density update is inherently robust to this: updating the density in a ``close'' round
simply leaves the potential unchanged, so the $\mathcal{O}(C + d\log T)$ bound on the rounds in which $m_t$ is far from $v_t$ carries over intact.

The evidence-set update requires more care. Without proximity feedback, incorrect samples ---
pairs $(u_t, m_t)$ for which $|\langle\theta^\star, u_t\rangle - m_t| > \epsilon$ --- can now enter
evidence sets for two reasons: corrupted feedback (at most $C$ times) and uncorrupted rounds in which $m_t$ is far from $v_t$
(at most $\mathcal{O}(C + d\log T)$ times, by the density argument). We absorb both sources into
an inflated ``perceived'' corruption $\bar{C} = \mathcal{O}(C + d\log T)$, and replace the
$2C+1$ evidence sets with $2\bar{C}+1$ sets. The commit and exploration analysis then goes through
unchanged, at the cost of $\mathcal{O}(\bar{C} \cdot d) = \mathcal{O}(d^2\log T)$ additional
exploration rounds to fill the extra sets. This yields our final guarantee:
\[
    R_T = \mathcal{O}\!\left(Cd + d^2\log T\right).
\]

\subsection{Related Works and Why Existing Techniques Fail}\label{sec:related}  

\paragraph{Non-Contextual Robust Pricing and Binary Search} 

Learning to price was initiated by~\cite{kleinberg2003value}, who showed that the optimal regret in non-contextual settings scales as $\Theta(\log\log T)$. A distinctive feature of
this setting is that the optimal regret in the uncorrupted one-dimensional case scales as
$\Theta(\log\log T)$,  reflecting the threshold structure of the
problem: binary-search strategies can locate the unknown valuation while losing little
revenue. Dynamic pricing has since been extended to richer demand
models~\citep{besbes2009dynamic,javanmard2019dynamic} and to repeated auctions with strategic
buyers~\citep{amin2013learning,cesabianchi2015regret,Drutsa17}.

In the robust non-contextual setting, \cite{gupta2025robust} studies a cautious-buyer corruption
model (where the adversary can flip buy to no-buy but not the reverse) and achieves
$\mathcal{O}(C + \log T)$ regret. \citep{kalupahana2026optimalregretrobustpricing} extend this
to arbitrary corruptions, achieving $\mathcal{O}(C + \log T)$ via a robust binary search combined
with a commit procedure, and $\mathcal{O}(C + \log^2 T)$ when the corruption budget is unknown.
Extending their approach to the contextual setting is not automatic: the one-dimensional algorithm
relies on endpoint consistency checks for a fixed scalar valuation $v$, but in the contextual
setting, the relevant value $v_t = \langle\theta^\star, u_t\rangle$ changes with every context.
In the scalar case, a single feedback bit suffices to decide the next binary search step;
in the contextual case, information from many different contexts must be aggregated before any
reliable price can be computed.

\paragraph{Contextual Pricing, Contextual Search, and Feasible-Set Methods}

Higher-dimensional variants of dynamic pricing have been studied under the names of contextual
pricing and contextual search. Early algorithms maintained a feasible set of parameters and refined it by cutting according to the observed feedback~\citep{cohen2020feature,lobel2018multidimensional,leme2022contextual}.
The optimal horizon dependence for contextual pricing was settled by~\cite{liu2021optimal},
who obtain $\mathcal{O}(d\log\log T + d\log d)$ regret and prove a matching $\Omega(d\log\log T)$
lower bound.

With corrupted feedback, feasible-set methods break down irreparably. A single corrupted bit
generates an incorrect halfspace that may remove $\theta^\star$ from the maintained region; later observations do not reveal which halfspace was wrong, and backtracking over all
possible subsets of corrupted cuts becomes combinatorial. This motivates approaches that never
permanently discard any parameter.

\paragraph{Robust Contextual Search and Density-Based Methods}
\cite{pmlr-v178-leme22a,leme2026density} replace feasible-set updates with multiplicative
density updates over $\Theta$. A corrupted observation can decrease the density weight near
$\theta^\star$, but cannot eliminate it, making the approach inherently robust. Their algorithms
achieve $\mathcal{O}(C + d\log(1/\varepsilon))$ regret for the $\varepsilon$-ball loss in the
contextual search setting. However, density-based methods do not directly yield a safe pricing
rule: a pricing algorithm must post a price \emph{below} the valuation, but if it keeps querying
near the density median (as \cite{pmlr-v178-leme22a,leme2026density} do), it alternates between prices just above and just below $v_t$. Above-valuation
prices incur constant regret per round, while the density near $\theta^\star$ remains essentially
stable since the queries are already close.
\citep{leme2026density} explicitly identifies pricing as an open problem where density updates alone are insufficient and additional ideas are needed.

\paragraph{Robust Contextual Pricing and the Challenge to Decouple $C$ and $T$ }

\citep{krishnamurthy2020corrupted} initiated the study of contextual search with adversarial
corruptions, obtaining guarantees applicable to pricing. The state of the art is due to~\citep{gupta2025robust}, who achieve $\mathcal{O}(Cd\log\log T)$
for robust contextual pricing with known corruption budget, improving the dimensional dependence of~\citep{krishnamurthy2023contextual}. They also show that a fully additive
$\mathcal{O}(C + d\log\log T)$ bound is unachievable in general: the corruption term cannot
simply be added to the optimal uncorrupted rate.  All these approaches follow
the same high-level template: a master algorithm selects among multiple base regret
minimizers, each designed for a different corruption level. This architecture inherently
couples $C$ and $T$.

In Table~\ref{tab:comparison-known-C}, we provide a comparison between our results and the ones of the state-of-the-art.
\begin{table}[!t]
\centering
\renewcommand{\arraystretch}{1.18}
\setlength{\tabcolsep}{6pt}
\caption{Comparison of corruption-robust pricing guarantees. }
\label{tab:comparison-known-C}
\begin{tabular}{lccc}
\toprule
\textbf{Work} &
\textbf{Regret} &
\textbf{Contextual}\\
\midrule

\citep{krishnamurthy2023contextual} &
$\mathcal O\!\left(Cd^3\log^2 T\right)$ &
\cmark \\

\citep{gupta2025robust} &
$\mathcal O\!\left(Cd\log\log T\right)$ &
\cmark \\

\citep{kalupahana2026optimalregretrobustpricing} &
$\mathcal O\!\left(C+\log T\right)$&
\xmark \\

\textbf{This work} &
$\mathcal O\!\left(Cd+d^2\log T\right)$ &
\cmark \\
\bottomrule
\end{tabular}
\end{table}

\section{Contextual Pricing}
\label{sec:contextual-pricing}

In this section, we introduce the classical (uncorrupted) repeated contextual pricing problem and its variant with corruption.

\paragraph{Notation}
For $d \in \mathbb{N}$, let $\mathcal{B}_d(a,r) \subset \mathbb{R}^d$ denote the $d$-dimensional Euclidean ball of radius $r>0$ centered at $a \in \mathbb{R}^d$. We denote by $\|\cdot\|_2$ the Euclidean norm on $\mathbb{R}^d$, by $\langle \cdot,\cdot\rangle$ the associated inner product, and by $\det(A)$ the determinant of a square matrix $A=[a_1,...,a_d]$ with $a_i\in\mathbb{R}^d$ for all $i\in[d]$. The canonical basis of $\mathbb{R}^d$ is denoted by $\{e_1,\ldots,e_d\}$. Finally, given $n\in \mathbb N$, we let $[n] \coloneqq \{1,\dots,n\}$.

\subsection{Uncorrupted Contextual Pricing}
\label{sec:uncorrupted-pricing}

We study a repeated posted-price problem between a seller and a sequence of buyers over a horizon $T \in \mathbb{N}$.
At each round $t \in [T]$, a \emph{context} $u_t \in \mathcal{B}_d(0,1)$ is presented to the learner, encoding observable features of the arriving customer, the product, or market conditions. We make no statistical assumptions on the context sequence $(u_t)_{t \in [T]}$: it may be chosen adversarially. 
After observing $u_t$, the seller posts a price $p_t$. The buyer's valuation in round $t$ depends on an unknown parameter $\theta^\star \in \Theta \coloneqq \mathcal{B}_d(0,1)$.
In particular, the valuation is:
\[
    v_t \coloneqq \langle \theta^\star, u_t \rangle \geq 0,
\]
and the buyer purchases the product if and only if $p_t \leq v_t$. Crucially, the seller never observes $v_t$ directly; instead, the seller observes only the binary sale outcome:
\[
    y_t \coloneqq \mathbb{I}\{p_t \leq v_t\} \in \{0,1\}.
\]
The revenue collected by the seller in round $t$ is $p_t \, y_t$.
We evaluate the seller against the clairvoyant benchmark that knows $\theta^\star$ (and hence $v_t$) and, in every round, posts the revenue-maximizing feasible price $p_t^\star \coloneqq v_t$. The \emph{regret} of a pricing policy over the horizon $T$ is the cumulative shortfall in revenue with respect to this benchmark:
\[
    R_T
    \;\coloneqq\;
    \sum_{t=1}^T
    \Big(
        v_t - p_t \, \mathbb{I}\{p_t \leq v_t\}
    \Big).
\]
The seller's goal is to design a pricing policy guaranteeing small regret $R_T$ for every possible $\theta^\star \in \Theta$ and every context sequence $u_1,\dots,u_T \in \mathcal{B}_d(0,1)$.

\subsection{Robust Contextual Pricing}
\label{sec:robust-pricing}

We now introduce the robust variant of contextual pricing. The interaction protocol is unchanged, except that the sale feedback may be \emph{corrupted} in a bounded number of rounds.

Formally, instead of observing the true feedback $y_t$, the seller observes a (possibly corrupted) signal $Y_t \in \{0,1\}$. The corruption level is parametrized by a budget $C$, where $C$ is known to the seller. There is a set of corrupted rounds $S \subseteq [T]$ with $|S| \leq C$ such that:
\[
    Y_t = y_t \quad \text{for every } t \notin S,
\]
while for $t \in S$, the corrupted signal $Y_t$ may be chosen arbitrarily --- in particular, as a function of the full history, the current context $u_t$, the posted price $p_t$, and the hidden parameter $\theta^\star$. The seller observes only $Y_t$ and has no information on which rounds are corrupted.

The seller's regret $R_T$ is defined exactly as in \Cref{sec:uncorrupted-pricing}. The goal is to design an algorithm whose regret degrades gracefully with $C$. In particular, our main goal is to design algorithms which decouple the dependency on $d$ and $C$ in the regret, i.e., providing a $\poly(d) (C+\log T)$ bound.

\section{Contextual Pricing with Proximity Feedback}
\label{sec:augmented-feedback}
As a first step before providing our dynamic pricing procedure, throughout this section, we will assume to have a more powerful feedback than the one usually considered in our setting. Specifically, we assume that the seller also has access to ``proximity'' feedback, in addition to the feedback $Y_t$.  In particular, we get as feedback whether the price $p_t$ belongs to the $\epsilon$ interval around $v_t$, that is, 
\[x_t=\mathbb{I}\{|p_t-v_t|\leq \epsilon\}.\]
There is a set of corrupted rounds $S \subseteq [T]$ with $|S| \leq C$ such that:
\[
    X_t = x_t \quad \text{for every } t \notin S,
\]
while for $t \in S$, the corrupted signals $X_t$, $Y_t$ may be chosen arbitrarily.

\subsection{Algorithmic Approach}

In this section, we illustrate our algorithm, whose pseudocode is provided in \Cref{alg:augmented-pricing}. The algorithm takes as input the time horizon $T$ and a ``perceived corruption'' parameter $\bar C$. In this section, $\bar C=C$. However, highlighting that our guarantees are more general will be useful in the next section. Then, it computes the parameter $\epsilon=T^{-3}$.

\begin{algorithm}[!t]

\caption{Robust Contextual Pricing with Proximity Feedback}

\label{alg:augmented-pricing}

\begin{algorithmic}[1]
\Require Horizon $T$, Corruption parameter $\bar C$
\State Set $\epsilon=T^{-3}$
\State Initialize $\mu_1$ as the uniform density over $\Theta$
\label{line:init-density}
\State Initialize $2\bar C+1$ empty evidence sets $\mathcal H_1,\ldots,\mathcal H_{2 \bar C+1}\gets\emptyset$
\label{line:init-evidence}
\For{$t=1,\ldots,T$}
    \State Observe context $u_t$
    \If{every $\mathcal H_j$ satisfies Equation~\eqref{eq:explains} for $u_t$}
        \Comment{\textcolor{blue}{\textsc{Commit Check}}}
        \label{line:commit-check}
        \State \textsc{Commit}()
        \label{line:commit}
        \State \textsc{Continue}
    \EndIf
    \State Let $m_t$ be the $\epsilon$-window median of $\mu_t$ in direction $u_t$, as in Equation~\eqref{eq:window-median}
    \State Post price $p_t=m_t$
    \State Observe sale feedback $Y_t$ and proximity feedback $X_t$ \label{line:observe-feedback} \Comment{\textcolor{blue}{\textsc{Query}}}
        \label{line:query}
    \If{$X_t=1$}
        \Comment{\textcolor{blue}{\textsc{Evidence Collection}}}
        \label{line:evidence-collection}
        \State Insert $(u_t,m_t)$ into a set $\mathcal H_j$ that does not satisfy Equation~\eqref{eq:explains} 
        \label{line:insert-evidence}
        \State Set $\mu_{t+1}=\mu_t$
    \Else
        \Comment{\textcolor{blue}{\textsc{Density Update}}}
        \label{line:density-update}
           \State \textsc{Update-Density}()
    \EndIf
\EndFor
\end{algorithmic}
\end{algorithm}

The algorithm works with two sets of variables:
\begin{itemize}
\item A density $\mu_t$ over the set of possible vectors $\Theta$, which is initialized as the uniform density
(Line~\ref{line:init-density}),
\item $2C+1$ disjoint \emph{evidence sets} $\mathcal H_1,\ldots,\mathcal H_{2C+1}$ (Line~\ref{line:init-evidence}).
\end{itemize}

 Each evidence set
contains pairs $(u_i,p_i)$ collected in rounds in which the proximity feedback declared that
$p_i$ belongs to the $\epsilon$-ball centered in $\langle \theta^\star,u_i\rangle$. We use the following definition of correct evidence sample and correct evidence set. 
\begin{definition}[Correct evidence]
    An evidence sample $(u_i,p_i)$ is correct if $|\langle \theta^\star,u_i\rangle-p_i|\leq \epsilon$. An evidence set $\mathcal H_j$ is correct if all the samples $(u_i,p_i)\in \mathcal H_j$ are correct.
\end{definition}
Notice that in \Cref{alg:augmented-pricing}, all the collected samples are correct except the one collected in corrupted rounds. Despite that, this equivalence will no longer be true in the next section, where we will design an algorithm without the proximity feedback.

At the core of our algorithm lie the evidence sets, which drive the algorithm's decisions. At a
high level, each round falls into one of three cases. First, the algorithm checks whether the
evidence sets already contain enough information about the current context to determine an
approximately optimal price. If this is the case, the algorithm commits to that price.
Otherwise, the algorithm focuses on gathering useful information about the current context. If $X_t=1$, meaning that the posted price is
close to the valuation, the algorithm collects a new evidence sample, which is correct on
uncorrupted rounds. If $X_t=0$, meaning that the posted price is not close to the valuation, the algorithm updates the density $\mu_t$.


Now, we describe the procedures in more detail. 

\paragraph{Commit}

Evidence sets are crucial for our \textsc{Commit} procedure---which we provide in Algorithm~\ref{alg:commit}---since we use them to be robust and to be sure about our commitment even under corruption. Indeed, suppose that at most $\bar C$ of the collected evidence samples are not correct. The underlying idea is simple. Since the sets are disjoint, a single
non-correct sample can contaminate only one evidence set. Thus, with $2\bar C+1$ evidence sets, even after
$\bar C$ corruptions there are still at least $\bar C+1$ correct  sets. 
Hence, if every evidence set has enough information to predict the valuation, we can use a
simple robust committing strategy.

Formally, given an evidence set $\mathcal H_j$ of size $k$, we say that $\mathcal H_j$ predicts $u$ if:
\begin{equation}
\label{eq:explains}
    \min_{\lambda_1,\dots, \lambda_{k}}
    \left\{
        \epsilon\|\lambda\|_1
        +
        \left\|
            u-\sum_{(u_i,p_i)\in\mathcal H_j}\lambda_i u_i
        \right\|_{2}
    \right\}
    \leq \frac{1}{8T}.
\end{equation}
The first term is the
error coming from the previous $\epsilon$-accurate observations. The second term is the error
left in the direction that is not spanned by the observations in $\mathcal H_j$.
The definition above has the following interpretation. If $u$ can be written almost as a linear
combination of contexts already stored in $\mathcal H_j$, then the values observed on those
contexts can be combined to estimate the value in direction $u$. The term $\epsilon\|\lambda\|_1$
accounts for the fact that each stored value is only known up to error $\epsilon$. The residual term
accounts for the part of $u$ that is not predicted by the stored contexts. Notice that since $\epsilon$ is very small,
linear combinations with high coefficients $\lambda_i$  are still useful for committing with high accuracy.

When every $\mathcal H_j$ satisfies Equation~\eqref{eq:explains}, we call the \textsc{Commit} procedure.

\begin{algorithm}[!t]
\caption{\textsc{Commit}}
\label{alg:commit}
\begin{algorithmic}[1]
\State Compute the intervals $I_j(u_t)$ as in Equation~\eqref{eq:interval} \label{line:compute-intervals}
        \State Let $\underline v_t$ be the $(\bar C+1)$-st largest lower endpoint
        \label{line:lower-endpoint}
        \State Post price $p_t=\underline v_t$
        \State Set $\mu_{t+1}=\mu_t$ \label{line:post_commit}
\end{algorithmic}
\end{algorithm}
This procedure uses the solution to the inequality to estimate $\widehat v_j(u)=\sum_{(u_i,p_i)\in\mathcal H_j}\lambda_i p_i$. We define the corresponding confidence interval as:
\begin{equation}
\label{eq:interval}
    I_j(u)=
    \left[
        \widehat v_j(u)-\frac{1}{8T},
        \widehat v_j(u)+\frac{1}{8T}
    \right].
\end{equation}
Then, the algorithm aggregates the intervals to design a robust commitment strategy. Since at most $\bar C$ evidence
sets may not be correct, the algorithm discards some of them by taking the $(\bar C+1)$-st largest between all the lower
endpoints of the $I_{j}(u)$ intervals, which we denote by $\underline v_t$ (Line~\ref{line:lower-endpoint}). This is done to avoid the case where corrupted
evidence produces artificially small lower endpoints and forces the seller to post an overly
conservative price. At the same time, taking the $(\bar C+1)$-st largest lower endpoint remains sufficiently small. Indeed,
even if the $\bar C$ largest lower endpoints were corrupted and too high, the selected stays below the true valuation. Therefore, the price is both conservative and close enough to the
true valuation.
Finally, the algorithm posts $p_t=\underline v_t$
(Line~\ref{line:post_commit}).

\paragraph{Explore}
If the algorithm does not have enough information to commit, the algorithm explores following the same decision of~\citep{pmlr-v178-leme22a, leme2026density}, but exploiting the feedback differently. Formally, we need to define the $\epsilon$-window median.

\begin{definition}[$\epsilon$-window median]
Given a density $\mu_t$ and a direction $u_t$, we define the $\epsilon$-window median of $\mu_t$ in direction $u_t$ as the value $m_t$ such that:
\begin{equation}
\label{eq:window-median}
    \int_{\Theta}
        \mu_t(\theta)\mathbb I\{\langle u_t,\theta\rangle\leq m_t-\epsilon/2\}d\theta
    =
    \int_{\Theta}
        \mu_t(\theta)\mathbb I\{\langle u_t,\theta\rangle\geq m_t+\epsilon/2\}d\theta .
\end{equation}
\end{definition}
Similarly to~\citep{pmlr-v178-leme22a, leme2026density}, we notice that since all the distributions that we work with are derived from continuous
density functions, they do not have point masses. Hence, the $\epsilon$-window median is always well defined.

The seller posts $m_t$ (Line~\ref{line:query}), i.e., the $\epsilon$-window median of $\mu_t$ in direction $u_t$, and observes both the  sale feedback $Y_t$ and the proximity feedback $X_t$ (Line~\ref{line:observe-feedback}). 
If the proximity feedback says
that $m_t$ is close to the value, the round is used to collect new evidence samples
(Line~\ref{line:evidence-collection}). In that case, the algorithm stores $(u_t,m_t)$ in one
evidence set that does not satisfy Equation~\eqref{eq:explains}
(Line~\ref{line:insert-evidence}). If the proximity feedback says that $m_t$ is not close to
the value, the round is used to update the density through the \textsc{Update-Density} procedure (Line~\ref{line:density-update})---which we provide in Algorithm~\ref{alg:density}---. To do so, we need the sale feedback $Y_t$, which reveals whether the value is above or below $m_t$.
If the product is sold, the
density is shifted towards the right side (Line~\ref{line:right-update}); otherwise, it is shifted
towards the left side (Line~\ref{line:left-update}). We underline that, following~\citep{pmlr-v178-leme22a, leme2026density}, we keep the density $\mu_t$ constant inside the $\epsilon$-window, so that the density at any point inside the $\epsilon$-ball around $\theta^\star$ is guaranteed not to decrease in uncorrupted rounds.

\begin{algorithm}[!t]
\caption{\textsc{Update-Density}}
\label{alg:density}
\begin{algorithmic}[1]
\If{$Y_t=1$}
    \State Set
            \label{line:right-update}
            \[
                \mu_{t+1}(\theta)=
                \begin{cases}
                    \frac{3}{2}\mu_t(\theta),
                    & \langle u_t,\theta\rangle\geq m_t+\epsilon/2,\\
                    \mu_t(\theta),
                    & |\langle u_t,\theta\rangle-m_t|\leq \epsilon/2,\\
                    \frac{1}{2}\mu_t(\theta),
                    & \langle u_t,\theta\rangle\leq m_t-\epsilon/2
                \end{cases}
            \]
        \Else
            \State Set
            \label{line:left-update}
            \[
                \mu_{t+1}(\theta)=
                \begin{cases}
                    \frac{3}{2}\mu_t(\theta),
                    & \langle u_t,\theta\rangle\leq m_t-\epsilon/2,\\
                    \mu_t(\theta),
                    & |\langle u_t,\theta\rangle-m_t|\leq \epsilon/2,\\
                    \frac{1}{2}\mu_t(\theta),
                    & \langle u_t,\theta\rangle\geq m_t+\epsilon/2
                \end{cases}
            \]
        \EndIf 
\end{algorithmic}
\end{algorithm}

\subsection{A $\mathcal{O}(C d+ d\log T)$ Regret Bound}

In this section, we analyze Algorithm~\ref{alg:augmented-pricing}. Overall, the proof is composed of two main results: the regret is small in a committing round, and the number of non-committing rounds is small.

\subsubsection{Committing Incurs Small Regret}

We now prove the first result. We show that, whenever the algorithm commits, the posted price
does not exceed the true value and is only slightly below it. Therefore, the buyer still purchases,
and the algorithm loses only a small amount compared to the best feasible price.

We begin by analyzing the information provided by a single correct evidence set.

\begin{lemma}\label{lem:single-evidence-set}
	Fix a correct evidence set $\mathcal H_j$. If $\mathcal H_j$ satisfies Equation~\eqref{eq:explains}
	for a context $u$, then the interval $I_j(u)$ defined in Equation~\eqref{eq:interval} contains
	$\langle \theta^\star,u\rangle$.
\end{lemma}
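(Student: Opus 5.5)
The plan is to fix the minimizer $\lambda$ of Equation~\eqref{eq:explains}, which is also the vector used to define $\widehat v_j(u)$, and to bound $|\langle\theta^\star,u\rangle-\widehat v_j(u)|$ by the objective value of Equation~\eqref{eq:explains}. That value is at most $\frac{1}{8T}$ by assumption, so the valuation lands inside $I_j(u)$. The minimum is attained: the objective is continuous and coercive in $\lambda$ because of the $\epsilon\|\lambda\|_1$ term, and the estimator is defined through this minimizer. Let $r \coloneqq u-\sum_i\lambda_i u_i$ denote the residual.

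First, decompose the valuation exactly using linearity:
\[
\langle\theta^\star,u\rangle=\sum_{(u_i,p_i)\in\mathcal H_j}\lambda_i\langle\theta^\star,u_i\rangle+\langle\theta^\star,r\rangle .
\]
Subtracting $\widehat v_j(u)=\sum_i\lambda_i p_i$ gives
\[
\langle\theta^\star,u\rangle-\widehat v_j(u)=\sum_i\lambda_i\bigl(\langle\theta^\star,u_i\rangle-p_i\bigr)+\langle\theta^\star,r\rangle .
\]

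Next, bound the two terms separately.
\begin{itemize}
\item Since $\mathcal H_j$ is correct, every sample satisfies $|\langle\theta^\star,u_i\rangle-p_i|\le\epsilon$. The first term is therefore at most $\sum_i|\lambda_i|\epsilon=\epsilon\|\lambda\|_1$ in absolute value, by the triangle inequality.
\item By Cauchy--Schwarz and $\theta^\star\in\Theta=\mathcal B_d(0,1)$, the second term is at most $\|\theta^\star\|_2\|r\|_2\le\|r\|_2$.
\end{itemize}
Adding the two bounds gives $|\langle\theta^\star,u\rangle-\widehat v_j(u)|\le\epsilon\|\lambda\|_1+\|r\|_2\le\frac1{8T}$. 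This is precisely the statement $\langle\theta^\star,u\rangle\in I_j(u)$ by Equation~\eqref{eq:interval}.

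The argument is routine. The only point needing care is that the $\lambda$ in $\widehat v_j$ is the one achieving the bound in Equation~\eqref{eq:explains}. If the minimum were only approached as an infimum, I would instead take any $\lambda$ witnessing the inequality and define the estimator with it; the same two-term bound then goes through unchanged.
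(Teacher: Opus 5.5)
Your proof is correct and is essentially the paper's argument. You fix the coefficients $\lambda$ witnessing Equation~\eqref{eq:explains}, split $\langle\theta^\star,u\rangle-\widehat v_j(u)$ into the residual term $|\langle\theta^\star,r\rangle|\le\|r\|_2$ and the sample-error term $\le\epsilon\|\lambda\|_1$, and conclude using the radius $1/(8T)$ of $I_j(u)$; your extra remarks on attainment of the minimum (by coercivity) and on using any witnessing $\lambda$ are sound and make explicit a point the paper leaves implicit.
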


\begin{proof}
Let $\lambda$ be the coefficients which satisfy Equation~\eqref{eq:explains}, and define the residual
$r=u-\sum_{(u_i,p_i)\in\mathcal H_j}\lambda_i u_i$. Since $\mathcal H_j$ is correct, every pair
$(u_i,p_i)\in\mathcal H_j$ satisfies $|\langle \theta^\star,u_i\rangle-p_i|\leq \epsilon$. Therefore:
\[
\begin{aligned}
\left|
\langle \theta^\star,u\rangle-\sum_{(u_i,p_i)\in\mathcal H_j}\lambda_i p_i
\right|
&=
\left|
\left\langle \theta^\star,
u-\sum_{(u_i,p_i)\in\mathcal H_j}\lambda_i u_i
\right\rangle
+
\sum_{(u_i,p_i)\in\mathcal H_j}
\lambda_i\big(\langle \theta^\star,u_i\rangle-p_i\big)
\right| \\
&\leq
\left|
\left\langle \theta^\star,
u-\sum_{(u_i,p_i)\in\mathcal H_j}\lambda_i u_i
\right\rangle
\right|
+
\sum_{(u_i,p_i)\in\mathcal H_j}
|\lambda_i|\,|\langle \theta^\star,u_i\rangle-p_i| \\
&\leq
|\langle \theta^\star,r\rangle|
+
\epsilon\|\lambda\|_1 \\
&\leq
\|r\|_{2}
+
\epsilon\|\lambda\|_1 .
\end{aligned}
\]
By Equation~\eqref{eq:explains}, the last quantity is at most $1/(8T)$. Since the interval in
Equation~\eqref{eq:interval} has radius $1/(8T)$, it contains $\langle \theta^\star,u\rangle$. This
concludes the proof.
\end{proof}

We now show that our commit is robust to incorrect evidence sets. Indeed, even in the
worst case, incorrect evidence sets can affect only $\bar C$ of the extreme lower endpoints.
Thus, taking the $(\bar C+1)$-st largest lower endpoint ensures that the committed price is still
controlled by a correct evidence set.

\begin{lemma}
\label{lem:safe-commit}
Suppose that at most $\bar C$ evidence sets are not correct and every evidence set satisfies \Cref{eq:explains} for a context $u_t$. Then, the price posted by the \textsc{Commit} procedure satisfies:
\[p_t\in \left[v_t-\frac{1}{4T},v_t\right].\]
\end{lemma}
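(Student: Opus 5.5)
The argument is an order-statistics count over the $2\bar C+1$ lower endpoints, using \Cref{lem:single-evidence-set} for each correct set. Let $\ell_j=\widehat v_j(u_t)-\frac{1}{8T}$ be the lower endpoint of $I_j(u_t)$, and let $G\subseteq[2\bar C+1]$ be the indices of correct evidence sets. By hypothesis $|G|\geq \bar C+1$.

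Since every $\mathcal H_j$ satisfies \Cref{eq:explains} for $u_t$, \Cref{lem:single-evidence-set} gives, for every $j\in G$,
\[
v_t\in I_j(u_t).
\]
Equivalently,
\[
v_t-\frac{1}{4T}\leq \ell_j\leq v_t \quad\text{for all } j\in G,
\]
because the interval has length $\frac{1}{4T}$. So every correct set produces a lower endpoint inside the target window $[v_t-\frac{1}{4T},v_t]$. It remains to show that the $(\bar C+1)$-st largest lower endpoint $\underline v_t$ also lies in this window.

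\emph{Upper bound.} Suppose, for contradiction, that $\underline v_t>v_t$. Then at least $\bar C+1$ indices $j$ satisfy $\ell_j\geq\underline v_t>v_t$. None of these can lie in $G$, since correct sets have $\ell_j\leq v_t$. Hence there would be at least $\bar C+1$ incorrect sets, contradicting the hypothesis. Therefore $\underline v_t\leq v_t$.

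\emph{Lower bound.} The $\bar C+1$ (or more) correct sets all have $\ell_j\geq v_t-\frac{1}{4T}$. So at least $\bar C+1$ endpoints are $\geq v_t-\frac{1}{4T}$, which forces the $(\bar C+1)$-st largest endpoint to satisfy $\underline v_t\geq v_t-\frac{1}{4T}$.

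Since $p_t=\underline v_t$, the two bounds give $p_t\in[v_t-\frac{1}{4T},v_t]$. The only delicate point is the order-statistic counting; in particular, ties among endpoints are harmless, because both directions argue through cardinalities of the sets $\{j:\ell_j\geq x\}$.
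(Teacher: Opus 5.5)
Your proof is correct and follows essentially the same route as the paper: you use \Cref{lem:single-evidence-set} to place every correct lower endpoint in $[v_t-\frac{1}{4T},v_t]$, then the order-statistic count over at least $\bar C+1$ correct sets bounds the $(\bar C+1)$-st largest endpoint from both sides. The only cosmetic difference is that you get the upper bound by contradiction, counting incorrect sets with endpoints above $v_t$, whereas the paper counts the $\bar C+1$ correct endpoints below $v_t$ out of $2\bar C+1$ total; this is the same pigeonhole argument.
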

\begin{proof}
Throughout this proof, we define a round correct when the sample $(u_t,p_t)$ is correct.
Since we are committing, by assumption, every evidence set satisfies Equation~\eqref{eq:explains}. Since the sample fed to evidence
sets are disjoint and there are at most $\bar C$ incorrect rounds, at most $\bar C$ evidence sets can contain
incorrect observations. Thus, at least $\bar C+1$ evidence sets are correct.

By Lemma~\ref{lem:single-evidence-set}, each correct evidence set returns an interval containing
$v_t$. Thus, every correct interval has lower endpoint at most $v_t$. Since at least $\bar C+1$ evidence sets are correct, at least $\bar C+1$ lower endpoints are no larger
than $v_t$. Therefore, the $(\bar C+1)$-st largest lower endpoint cannot exceed $v_t$, and so
$\underline v_t\leq v_t$. 

Moreover, every correct interval has radius $1/(8T)$. Hence, by Lemma~\ref{lem:single-evidence-set},
the center of every correct interval is within $1/(8T)$ of $v_t$. Therefore, every correct lower
endpoint belongs to $[v_t-1/(4T),v_t]$. Since at least $\bar C+1$ intervals are correct,
there are at least $\bar C+1$ lower endpoints that are at least $v_t-1/(4T)$. Therefore, the
$(\bar C+1)$-st largest lower endpoint cannot be smaller than $v_t-1/(4T)$. By the definition of
$\underline v_t$, this gives $
    \underline v_t\geq v_t-\frac{1}{4T}.
$
The algorithm posts $p_t=\underline v_t$. Since we already proved that
$\underline v_t\leq v_t$, this implies $p_t\leq v_t$, and so the product is sold. The regret of
the round is therefore:
\[
    v_t-p_t
    =
    v_t-\underline v_t \leq
    \frac{1}{4T} .
\]
This concludes the proof.
\end{proof}

Notice that as a direct consequence, we have that the regret in each commit round is at most $\frac{1}{4T}$.

In the rest of the section, we bound the number of non-committing rounds. To do that, we use two alternative arguments depending on whether the queried price is far from or close to the value. 

\subsubsection{Bounding Exploration Rounds with Far Medians}

We bound the number of rounds in which the feedback was $x_t=0$, exploiting the density update. From a technical perspective, this is a simple adaptation of the ideas of~\citep{pmlr-v178-leme22a, leme2026density}. We prove a slightly general statement, under the assumption that the density update might be performed even when $X_t=1$. While this is not the case for \Cref{alg:augmented-pricing}, this will be useful in the next section.
\begin{lemma}
\label{lem:density-rounds}
Let $\mathcal{T} \subseteq [T]$ denote a set of rounds in which the \textsc{Update-Density} procedure is invoked; this might include rounds in which $X_t=1$. Then, the number of rounds $t \in \mathcal{T}$ such that $x_t = 0$ is at most $4C+20d\log T$.
\end{lemma}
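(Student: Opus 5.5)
We will run a potential argument on the density mass near $\theta^\star$, following \citep{pmlr-v178-leme22a, leme2026density}. Define the ball $B^\star=\mathcal B_d(\theta^\star,\epsilon/2)\cap\Theta$ and the potential
\[
\Phi_t=\log\frac{\int_{B^\star}\mu_t(\theta)\,d\theta}{\int_\Theta \mu_t(\theta)\,d\theta}.
\]
Since $\Phi_t\le 0$ always, we will track how it moves in each type of round and compare its final value with its initial one.

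First, the total mass is invariant under \textsc{Update-Density}. By Equation~\eqref{eq:window-median}, the two tails have equal mass $a$, and the window has mass $w$. After the update the total is $\tfrac32 a+\tfrac12 a+w=2a+w$, which is unchanged. Rounds in which the density is not updated leave everything unchanged. Hence $\int_\Theta\mu_t=\int_\Theta\mu_1$ for all $t$.

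Initially, with $\mu_1$ uniform, $\Phi_1\ge \log\big(\mathrm{vol}(B^\star)/\mathrm{vol}(\Theta)\big)\ge -d\log(c/\epsilon)$ for an absolute constant $c$. With $\epsilon=T^{-3}$ this gives $\Phi_1\ge -(3d\log T+O(d))$. We must handle the intersection of the ball with $\Theta$ at the boundary; a ball of radius $\epsilon/4$ inside $\Theta$ near $\theta^\star$ suffices.

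Next we analyse an update round $t\in\mathcal T$, splitting into cases.

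\emph{Uncorrupted with $x_t=0$.} Then $|v_t-m_t|>\epsilon$. For $\theta\in B^\star$ we have $|\langle u_t,\theta\rangle-v_t|\le\|u_t\|\epsilon/2\le\epsilon/2$, so all of $B^\star$ lies strictly on the side of $\{\langle u_t,\theta\rangle\ge m_t+\epsilon/2\}$ (or its mirror) that agrees with $\theta^\star$. The uncorrupted bit $Y_t$ picks exactly that side, so every point of $B^\star$ is multiplied by $3/2$ and $\Phi$ increases by $\log(3/2)$.

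\emph{Uncorrupted with $x_t=1$.} Now $|v_t-m_t|\le\epsilon$. The hard part is to show that no point of $B^\star$ is multiplied by $1/2$. If $v_t\ge m_t$, the feedback is $Y_t=1$, and any $\theta\in B^\star$ has $\langle u_t,\theta\rangle\ge v_t-\epsilon/2\ge m_t-\epsilon/2$, so it lands in the window or the promoted tail and never in the demoted tail. The case $v_t<m_t$ (so $Y_t=0$) is symmetric. Hence $\Phi$ does not decrease. This is exactly where the $\epsilon$-window and the radius $\epsilon/2$ of $B^\star$ must be matched, and I expect it to be the main point to check carefully, including the equality edge cases.

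\emph{Corrupted rounds.} The density at any point is multiplied by at least $1/2$, so $\Phi$ drops by at most $\log 2$.

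Summing over all rounds, let $N$ be the number of rounds $t\in\mathcal T$ with $x_t=0$. At most $C$ of these are corrupted, so
\[
0\ge\Phi_{T+1}\ge\Phi_1+(N-C)\log\tfrac32-C\log2.
\]
This gives $N\le C+\big(C\log 2+3d\log T+O(d)\big)/\log(3/2)$. Using $\log 2/\log(3/2)<2$ and $3/\log(3/2)<8$ yields $N\le 4C+20d\log T$ for $T\ge2$, with the constants absorbing the $O(d)$ volume term.
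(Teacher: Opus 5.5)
Your proposal is correct and follows essentially the same potential argument as the paper: track the density mass on $\mathcal B_d(\theta^\star,\epsilon/2)\cap\Theta$, gain a factor $3/2$ on uncorrupted rounds with $x_t=0$, lose at most a factor $1/2$ on each of the at most $C$ corrupted rounds, and compare against an initial mass of order $\epsilon^d$ up to a factor $2^{-\mathcal O(d)}$. Two of your steps are used only implicitly in the paper: that \textsc{Update-Density} preserves total mass, and that uncorrupted rounds with $x_t=1$ never demote points of the ball. Spelling them out is a welcome addition, not a deviation.
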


\begin{proof}
Let $ \mathcal{B}^\star=\mathcal B_d(\theta^\star,\epsilon/2)\cap \Theta$, and define the potential
$\Phi_t=\int_{\mathcal{B}^\star}\mu_t(\theta)d\theta$. Since $\Theta=\mathcal{B}_d(0,1)$ and
$\epsilon=T^{-3}$, the initial density satisfies $\Phi_1\geq 2^{-(d+1)} \epsilon^d$.

Consider an uncorrupted exploration round in which the proximity feedback declares that $m_t$ is outside
the $\epsilon$-ball. If the product is sold, then $v_t\geq m_t+\epsilon$. Thus, for every
$\theta\in \mathcal{B}^\star$, we have $\langle u_t,\theta\rangle\geq m_t+\epsilon/2$, and the density on
$\mathcal{B}^\star$ is multiplied by $3/2$. The case in which the product is not sold is symmetric. Therefore,
each uncorrupted outside round increases $\Phi_t$ by a factor $3/2$.

A corrupted round can decrease $\Phi_t$ by at most a factor $1/2$, because the multiplicative
update never uses a factor smaller than $1/2$. Since there are at most $C$ corrupted rounds and
$\Phi_t\leq1$ for every $t$, if $N$ denotes the number of uncorrupted outside rounds, then:
\[
    1
    \geq
    \Phi_1\left(\frac12\right)^C\left(\frac32\right)^N.
\]
Using $\Phi_1\geq2^{-(d+1)}\epsilon^d$, we obtain
$N\log(3/2)\leq C\log2+(d+1)\log2+d\log(1/\epsilon)$. Hence, recalling that
$\epsilon=T^{-3}$,
\[
    N
    \leq
    \frac{C\log 2+(d+1)\log 2+3d\log T}{\log(3/2)}
    \leq
    3C+20d\log T,
\]
where we used $T\geq2$. Noticing that there are at most $C$ corrupted rounds concludes the proof.
\end{proof} 

\subsubsection{Bounding Exploration Rounds with Close Medians}

It remains to bound the number of uncorrupted rounds in which the proximity feedback says that the query is close to the value. This is one of the novel technical components in our analysis.

As an intermediate result, we show that each evidence set remains small. The intuition is that an
evidence set should only store contexts that bring genuinely new information. If a new context can
already be reconstructed, up to the required tolerance, from the contexts that are currently stored,
then the evidence set already predicts it, and no insertion is made. Therefore, every insertion
means that the new context points in a direction that is still missing from the set.

This suggests that an evidence set cannot keep growing for too long. In a $d$-dimensional space,
there are only $d$ independent directions. Of course, the argument cannot rely on exact linear
independence, because the explanation test allows approximation error. A bit more formally, we track the largest determinant generated by any $d$
stored vectors, after adding small fictitious directions $\epsilon e_1,\ldots,\epsilon e_d$ to
avoid degeneracies. Geometrically, the absolute value of this determinant is the volume of the
parallelepiped spanned by those $d$ vectors. Thus, tracking the determinant is a way of tracking
how much $d$-dimensional volume the evidence set has accumulated.

Whenever a new context is inserted, it must be sufficiently far from the span already generated at
scale $\epsilon$; otherwise, it would have passed the explanation test. This forces the tracked
determinant, and hence the corresponding volume, to increase by a superconstant multiplicative factor.
Since all contexts are bounded, every such determinant is always at most a constant, while it
starts from roughly $\epsilon^d$ because of the fictitious directions. Hence only $\mathcal O(d)$
multiplicative increases are possible.

This is done in the following lemma.
\begin{lemma}
\label{lem:evidence-growth}
Fix an evidence set index $j$. At most $\mathcal{O}(d)$ pairs $(m_t, u_t)$, which do not satisfy \Cref{eq:explains} for $u_t$ and $\mathcal{H}_j$, can be added to $\mathcal{H}_j$. 
\end{lemma}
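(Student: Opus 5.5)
The argument is a volume (determinant) potential on the contexts stored in $\mathcal H_j$, padded with the fictitious vectors $\epsilon e_1,\ldots,\epsilon e_d$. Let $S_k$ be the multiset of the first $k$ inserted contexts together with $\epsilon e_1,\ldots,\epsilon e_d$. Define
\[
\Psi_k=\max\bigl\{|\det[w_1,\ldots,w_d]| : w_1,\ldots,w_d\in S_k\bigr\}.
\]
All vectors have norm at most $1$, so Hadamard's inequality gives $\Psi_k\le 1$. Using the fictitious basis, $\Psi_0\ge\epsilon^d=T^{-3d}$. The whole difficulty is to show that each insertion multiplies $\Psi$ by at least some factor $\gamma>1$; I expect $\gamma$ to be of order $T$.

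\textbf{Step 1: a non-predicted context is far from the padded span.} Suppose $u$ does not satisfy \Cref{eq:explains} for $\mathcal H_j$. Fix a maximizing basis $W=[w_1,\ldots,w_d]$ of $\Psi_k$. It is invertible, since $\Psi_k>0$. Write $u=\sum_i\alpha_i w_i$ with $\alpha=W^{-1}u$.

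Each $w_i$ is either a stored $u_i$ or some $\epsilon e_l$. Take $\lambda_i=\alpha_i$ on the stored vectors. The fictitious part $\sum\alpha_i\epsilon e_{l(i)}$ then becomes the residual, and its norm is at most $\epsilon\|\alpha_{\mathrm{fict}}\|_1$. The objective in \Cref{eq:explains} is therefore at most $\epsilon\|\alpha\|_1\le\epsilon\sqrt d\,\|\alpha\|_2$.

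Failure of \Cref{eq:explains} thus forces $\|\alpha\|_\infty\ge\|\alpha\|_1/d>1/(8T\epsilon d)$. Hence some coordinate satisfies $|\alpha_i|>T^2/(8d)$.

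\textbf{Step 2: the determinant jumps.} By Cramer's rule, replacing $w_i$ by $u$ gives $|\det W'|=|\alpha_i|\,|\det W|$. The stored contexts and fictitious vectors all lie in $S_{k+1}$, so
\[
\Psi_{k+1}\ge \frac{T^2}{8d}\,\Psi_k .
\]
This is the step I regard as the crux. Its key point is that the $\ell_1$ penalty with weight $\epsilon$ in \Cref{eq:explains} interacts with the fictitious $\epsilon e_l$ exactly so that large coordinates in the maximal-volume basis must appear. The objective bound uses only the optimum of \Cref{eq:explains}, so no feasibility issue arises.

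\textbf{Step 3: counting.} Let $N$ be the number of insertions of non-predicted contexts. Chaining Step 2 gives
\[
1\ge\Psi_N\ge T^{-3d}\bigl(T^2/(8d)\bigr)^N .
\]
So $N\le 3d\log T/\log(T^2/(8d))$, which is $\mathcal O(d)$ once $T^2\ge 64d^2$, i.e.\ $T\ge 8d$. Only non-predicted contexts are ever inserted into $\mathcal H_j$, by Line~\ref{line:insert-evidence}.

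If $T<8d$, the trivial bound $N\le T=\mathcal O(d)$ applies. The threshold $T\ge 8d$ comes from requiring a multiplicative gain of at least $T$, which makes the $\log T$ factors cancel.
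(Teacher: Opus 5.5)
Your proposal is correct and follows essentially the same route as the paper: the same potential (maximum determinant over the stored contexts padded with $\epsilon e_1,\ldots,\epsilon e_d$), the same key step (a non-predicted $u$ must have $\ell_1$-large coordinates in a maximal basis, shown by using the real coefficients as $\lambda$ and treating the fictitious part as the residual), the same column replacement that multiplies the determinant by more than $T^2/(8d)$, and the same counting with a trivial bound for small $T$. The differences are cosmetic: your constant is $8$ where the paper uses $16$ for slack, and the intermediate bound $\epsilon\sqrt d\,\|\alpha\|_2$ is not needed.
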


\begin{proof}

For ease of presentation, we first define the maximum absolute determinant of a finite set of
vectors. For any finite set $\mathcal S=\{v_1,\ldots,v_n\}$ of vectors in $\mathbb R^d$, define:
\[
    D(\mathcal S)
    =
    \max_{\substack{i_1,\ldots,i_d\in[n]}}
    \left|
        \det\left([v_{i_1},\ldots,v_{i_d}]\right)
    \right|,
\]
where $[v_{i_1},\ldots,v_{i_d}]$ denotes the $d\times d$ matrix whose columns are
$v_{i_1},\ldots,v_{i_d}$. Geometrically, $D(\mathcal S)$ is the largest volume of a
parallelepiped spanned by $d$ vectors in $\mathcal S$.

We use the fictitious contexts $\mathcal C=\{\epsilon e_1,\ldots,\epsilon e_d\}$, where $e_i$ is
the $i$-th canonical basis vector. For a fixed evidence set $\mathcal H_j$, define:
\[
    \mathcal S_j
    =
    \{u_i:(u_i,m_i)\in\mathcal H_j\}
    \cup
    \mathcal C,
    \qquad
    D=D(\mathcal S_j).
\]
It always holds that $D\geq\epsilon^d$, because
$\det([\epsilon e_1,\ldots,\epsilon e_d])=\epsilon^d$. Moreover, $D\leq1$ by Hadamard's
inequality, since all vectors in $\mathcal S_j$ have Euclidean norm at most $1$.

We fix an insertion of a new context $u$ in $\mathcal H_j$. This means
$\mathcal H_j$ does not satisfy \Cref{eq:explains} for $u$.

\paragraph{Step 1. An insertion forces large coefficients}
Let $b_1,\ldots,b_d$ be any basis formed by vectors in $\mathcal S_j$. Such a basis exists because
$\mathcal S_j$ contains the fictitious vectors $\epsilon e_1,\ldots,\epsilon e_d$. Write
$u=\sum_{k=1}^d a_k b_k$. We claim that:
\[
    \|a\|_1>\frac{1}{16T\epsilon}.
\]
Suppose, toward a contradiction, that $\|a\|_1\leq1/(16T\epsilon)$. Let $R\subseteq[d]$ be the
indices corresponding to real contexts and let $F\subseteq[d]$ be the indices corresponding to
fictitious contexts. For the residual:
\[
    r
    =
    u-\sum_{k\in R}a_k b_k
    =
    \sum_{k\in F}a_k b_k.
\]
Using the real contexts with coefficients $(a_k)_{k\in R}$, the left-hand side of
\Cref{eq:explains} is at most
$
    \epsilon\sum_{k\in R}|a_k|+\|r\|_{2}.
$
Additionally, Since each fictitious vector has the form $\epsilon e_i$, we have $\|b_k\|_{2}\leq\epsilon$ for every $k\in F$. Therefore, we can conclude:
\[
    \epsilon\sum_{k\in R}|a_k|+\|r\|_{2}
    \leq
    \epsilon\sum_{k\in R}|a_k|
    +
    \epsilon\sum_{k\in F}|a_k|
    =
    \epsilon\|a\|_1
    \leq
    \frac{1}{16T}
    \leq
    \frac{1}{8T}.
\]
Thus $\mathcal H_j$ would predict $u$, contradicting the insertion rule. Hence
$\|a\|_1>1/(16T\epsilon)$.

\paragraph{Step 2. Each insertion increases the maximum determinant}
Apply Step 1 to a basis $b_1,\ldots,b_d$ attaining the current maximum determinant, so that:
\[
    D
    =
    \left|
        \det([b_1,\ldots,b_d])
    \right|.
\]
Since $\|a\|_1>1/(16T\epsilon)$, there exists an index $k$ such that
$|a_k|>1/(16dT\epsilon)$. Consider the matrix obtained by replacing $b_k$ with $u$. By
linearity of the determinant in the $k$-th column:
\[
\begin{aligned}
\det([b_1,\ldots,b_{k-1},u,b_{k+1},\ldots,b_d]) &=
\det\left(\left[b_1,\ldots,b_{k-1},\sum_{\ell=1}^d a_\ell b_\ell,b_{k+1},\ldots,b_d\right]\right) \\
& =
\sum_{\ell=1}^d
a_\ell
\det([b_1,\ldots,b_{k-1},b_\ell,b_{k+1},\ldots,b_d]) \\
& =
a_k\det([b_1,\ldots,b_d]).
\end{aligned}
\]
All terms with $\ell\neq k$ vanish because the corresponding matrix contains two identical
columns. Therefore, after inserting $u$, the maximum determinant is at least
$|a_k|D$, and hence it increases by a factor larger than:
\[
    \frac{1}{16dT\epsilon}
    =
    \frac{T^2}{16d},
\]
where we used $\epsilon=T^{-3}$.

\paragraph{Step 3. Potential argument}
Let $M$ be the number of insertions into $\mathcal H_j$. Since the maximum determinant starts at
least $\epsilon^d$ and increases by a factor larger than $1/(16dT\epsilon)$ at every insertion,
after $M$ insertions, the maximum determinant is at least:
\[
    \epsilon^d
    \left(\frac{1}{16dT\epsilon}\right)^M
    =
    \left(\frac{1}{T}\right)^{3d}
    \left(\frac{T^2}{16d}\right)^M.
\]
On the other hand, by Hadamard's inequality, the maximum determinant is always at most $1$.
Therefore:
\[
    \left(\frac{1}{T}\right)^{3d}
    \left(\frac{T^2}{16d}\right)^M
    \leq
    1.
\]
If $T\geq16d$, then $T^2/(16d)\geq T$, and the previous inequality implies
$T^M\leq T^{3d}$. Hence $M\leq3d$. If instead $T<16d$, the trivial bound $M\leq T$ gives
$M<16d$. In both cases, $M=\mathcal O(d)$. This concludes the proof.
\end{proof}

The previous lemma implies an upper bound on the total number of evidence-collection rounds.

\begin{lemma}
\label{lem:evidence-rounds}
For \Cref{alg:augmented-pricing}, the number of exploration rounds in which $x_t=1$ is $\mathcal O(( C+1)d)$.
\end{lemma}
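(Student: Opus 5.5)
The plan is to split the exploration rounds with $x_t=1$ into corrupted and uncorrupted ones, and to handle the uncorrupted ones through the evidence-insertion rule together with \Cref{lem:evidence-growth}. Corrupted rounds number at most $C$ in total, so they contribute at most $C$ regardless of what the algorithm does there.

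Now fix an uncorrupted exploration round $t$ with $x_t=1$. Since $t\notin S$, we have $X_t=x_t=1$, so the algorithm enters the \textsc{Evidence Collection} branch (Line~\ref{line:evidence-collection}). Because the round is an exploration round, the commit check on Line~\ref{line:commit-check} failed. Hence at least one evidence set $\mathcal H_j$ does not satisfy \Cref{eq:explains} for $u_t$, so the insertion on Line~\ref{line:insert-evidence} is well defined. The pair $(u_t,m_t)$ is then added to such a set. This is exactly an insertion of the type counted by \Cref{lem:evidence-growth}.

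The same holds for corrupted rounds with $X_t=1$: they also perform such an insertion. So every round landing in the evidence branch produces one insertion into a set that does not predict the current context. The map from these rounds to insertions is injective, since each round inserts exactly one pair.

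By \Cref{lem:evidence-growth}, each $\mathcal H_j$ receives at most $\mathcal O(d)$ such insertions. The bound holds uniformly in $j$, since the lemma's constant does not depend on the set or on which samples are correct. Summing over the $2\bar C+1=2C+1$ sets, the total number of insertions is at most $(2C+1)\cdot\mathcal O(d)=\mathcal O((C+1)d)$. Therefore the number of uncorrupted exploration rounds with $x_t=1$ is $\mathcal O((C+1)d)$. Adding the at most $C$ corrupted rounds gives the claimed $\mathcal O((C+1)d)$.

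The only delicate point is confirming that every relevant round truly triggers an insertion into a non-predicting set. On uncorrupted rounds, $x_t=1$ must coincide with the observed $X_t=1$, which holds because $t\notin S$. Also, \Cref{lem:evidence-growth} only counts insertions made when the set fails \Cref{eq:explains}, so the argument relies on the insertion rule choosing such a set. That choice is guaranteed by the failed commit check. No correctness of the samples is needed, since the determinant argument is purely geometric.
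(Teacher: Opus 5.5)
Your proof is correct and follows the paper's route. A failed commit check guarantees that every round with $X_t=1$ inserts into a non-predicting set, \Cref{lem:evidence-growth} caps each of the $2C+1$ sets at $\mathcal O(d)$ such insertions, and the at most $C$ rounds with $X_t\neq x_t$ are added separately, which is the same as the paper's closing remark that $\sum_{t}\mathbb I\{X_t\neq x_t\}\leq C$.
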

\begin{proof}
Since the algorithm has not committed, at least one evidence
set does not satisfy Equation~\eqref{eq:explains}. Hence, upon receiving $X_t=1$,  the algorithm inserts the new pair into one such
set. By Lemma~\ref{lem:evidence-growth}, each evidence set receives at most $\mathcal O(d)$ such
insertions. There are $2 C+1$ evidence sets, and therefore the total number of insertions
is $\mathcal O(( C+1)d)$. 

Noticing that $\sum_{t\in[T]}\mathbb{I}\{X_t\neq x_t\}\leq C$ concludes the proof.
\end{proof}

\subsubsection{Final Result}

We can now combine the previous bounds and prove the regret guarantee.

\begin{theorem}
\label{thm:augmented-regret}
Algorithm~\ref{alg:augmented-pricing} attains regret:
\[
    R_T=\mathcal O(Cd+d\log T).
\]
\end{theorem}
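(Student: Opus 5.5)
The proof partitions the rounds $[T]$ into commit rounds and exploration rounds and bounds the regret of each class separately. Per-round regret is always at most $v_t\leq 1$, since $\|\theta^\star\|_2\leq1$ and $\|u_t\|_2\leq1$. So it suffices to show two things: every commit round costs at most $1/(4T)$, and the number of exploration rounds is $\mathcal O(Cd+d\log T)$.

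\emph{Commit rounds.} Here $\bar C=C$, and in \Cref{alg:augmented-pricing} a sample is inserted only when $X_t=1$. On an uncorrupted round, $X_t=x_t=1$ means $|m_t-v_t|\leq\epsilon$, so the inserted sample is correct. Hence incorrect samples can only come from corrupted rounds, of which there are at most $C$. Because evidence sets are disjoint, at most $C=\bar C$ of them are not correct. The hypotheses of \Cref{lem:safe-commit} therefore hold in every commit round, giving $p_t\in[v_t-1/(4T),v_t]$. The sale occurs and the round's regret is at most $1/(4T)$. Summing over at most $T$ commit rounds contributes at most $1/4$.

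\emph{Exploration rounds.} I split exploration rounds by the true proximity bit $x_t$, not the observed $X_t$.
\begin{itemize}
\item Rounds with $x_t=1$ are bounded by \Cref{lem:evidence-rounds} as $\mathcal O((C+1)d)$.
\item Rounds with $x_t=0$ and $X_t=0$ are rounds where \textsc{Update-Density} is invoked. By \Cref{lem:density-rounds}, with $\mathcal T$ the set of update rounds, there are at most $4C+20d\log T$ of them.
\item Rounds with $x_t=0$ but $X_t=1$ are necessarily corrupted, so there are at most $C$ of them.
\end{itemize}
In total, exploration rounds number $\mathcal O(Cd+d+C+d\log T)=\mathcal O(Cd+d\log T)$, using $T\geq2$ so that $d\leq d\log T/\log 2$. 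Each exploration round costs at most $1$.

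Adding the two contributions gives $R_T\leq 1/4+\mathcal O(Cd+d\log T)=\mathcal O(Cd+d\log T)$.

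The only delicate point is making sure each invoked lemma's hypotheses genuinely hold. For \Cref{lem:safe-commit}, the paper's proof counts "incorrect rounds" rather than incorrect sets, and the argument above supplies exactly that count. For \Cref{lem:evidence-rounds}, every insertion goes into a set that does not predict $u_t$, which is what \Cref{lem:evidence-growth} requires. Beyond these checks, the proof is pure bookkeeping.
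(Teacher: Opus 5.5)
Your proof is correct and follows the paper's own decomposition. Commit rounds cost at most $1/(4T)$ by \Cref{lem:safe-commit}, and exploration rounds are split by the true bit $x_t$ and counted via \Cref{lem:density-rounds} and \Cref{lem:evidence-rounds}. You add a step the paper glosses over when it cites \Cref{lem:density-rounds} directly: you separately charge the at most $C$ corrupted rounds with $x_t=0$ but $X_t=1$, which are not density-update rounds. One harmless slip is that the exploration price $m_t$ can be negative, so the per-round regret is bounded by $v_t-m_t\leq 2$ (as the paper uses) rather than by $v_t\leq 1$, which only changes constants.
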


\begin{proof}
We decompose the regret according to the type of round: commit, exploration rounds where $x_t=0$, and exploration rounds with $x_t=1$.

By Lemma~\ref{lem:safe-commit}, each commit round has regret at most $1/4T$. Therefore, the total
regret of commit rounds is at most $1/4$.

Then, we notice that each exploration round has regret at most $2$ since $\Theta= \mathcal{B}_d(0,1)$ and
$\|u_t\|_2\leq 1$. Hence, to conclude the proof, we just need to bound the number of exploration rounds.

Consider the exploration rounds in which $x_t=0$.
By Lemma~\ref{lem:density-rounds}, the number of those rounds  is
$\mathcal O(C+d\log T)$.  

Consider the exploration rounds in which $x_t=1$.
Similarly, By Lemma~\ref{lem:evidence-rounds}, the number of those rounds is
$\mathcal O((C+1)d)$.

Putting the bounds together, we get:
\[
    R_T
    =
    \mathcal O(1)+\mathcal O(C+d\log T)+\mathcal O((C+1)d)
    =
    \mathcal O(Cd+d\log T).
\]
This concludes the proof.
\end{proof}

\section{From Proximity Feedback to Sale Feedback}
\label{sec:randomized-feedback}

In this section, we prove the main result of our paper. To do so, we show how to extend \Cref{thm:augmented-regret} to the setting with only sale feedback, i.e., $Y_t$.

The crucial difference between the two settings is that the algorithm no longer knows whether the posted price lies close to the true value
$v_t=\langle\theta^\star,u_t\rangle$. Hence, we are unable to determine whether the current evidence sample is correct.

We circumvent such an issue with an optimistic approach. In each exploration round, we both add the evidence sample to one of the evidence sets that do not satisfy \Cref{eq:explains} and update the density distribution $\mu_t$. Depending on the \emph{unknown} $x_t$ at least one of the two would be useful. In particular, if the median is close to the true value, it provides a useful value estimate. If the median is far from the true value, then the density update makes progress. The main downside of this approach is that we are adding additional not correct samples, which must be handled carefully.

\subsection{Algorithmic Approach}

Similarly to Algorithm~\ref{alg:augmented-pricing}, the algorithm maintains a
density $\mu_t$ over $\Theta$ and a collection of disjoint evidence sets. The first main difference is the corruption parameter:
\[
\qquad
\bar C=\left\lceil 5C+20d\log T\right\rceil,
\]
while we still  set $\epsilon=T^{-3}$.

Besides the choice of the $\bar C$ parameter, the crucial difference is that in an exploration round,
the algorithm both stores the pair $(u_t,m_t)$ and updates the density. The pseudocode is presented in \Cref{alg:standard-pricing}.

Notice that we artificially increase the corruption level $\bar C$ to take into account that we are adding inaccurate samples to the evidence set when $x_t=0$. Unlike in the proximity-feedback algorithm, an incorrect 
estimate can arise either from corrupted feedback or from an uncorrupted round in which the median is
too far from the true value.
This larger $\bar C$ upper bounds the number of evidence sets that may contain incorrect estimates.

\begin{algorithm}[!t]
\caption{Robust Contextual Pricing}
\label{alg:standard-pricing}
\begin{algorithmic}[1]
\Require Horizon $T$, corruption parameter $\bar C$
\State Set $\epsilon=T^{-3}$ 
\State Initialize $\mu_1$ as the uniform density over $\Theta$
    \label{line:std-init-density}
\State Initialize $2\bar C+1$ empty evidence sets $\mathcal H_1,\ldots,\mathcal H_{2\bar C+1}\gets\emptyset$
    \label{line:std-init-evidence}
\For{$t=1,\ldots,T$}
    \State Observe context $u_t$
    \If{every $\mathcal H_j$ satisfies Equation~\eqref{eq:explains} for $u_t$}
        \Comment{\textcolor{blue}{\textsc{Commit Check}}}
        \label{line:std-commit-check}
        \State \textsc{Commit}()
        \label{line:std-commit}
        \State \textsc{Continue}
    \EndIf
    \State Let $m_t$ be the $\epsilon$-window median of $\mu_t$ in direction $u_t$, as in Equation~\eqref{eq:window-median}
    \State Post price $p_t=m_t$ 
        \label{line:std-query}
    \State Observe sale feedback $Y_t$  \label{line:std-observe-feedback} 
        \State Insert $(u_t,m_t)$ into a set $\mathcal H_j$ that does not satisfy Equation~\eqref{eq:explains} \label{line:std-evidence-collection}\Comment{\textcolor{blue}{\textsc{Evidence Collection}}}
        \label{line:std-insert-evidence}
        \State \textsc{Update-Density}()\Comment{\textcolor{blue}{\textsc{Density Update}}}
        \label{line:std-density-update}
\EndFor
\end{algorithmic}
\end{algorithm}

\subsection{A $\mathcal{O}(Cd+d^2\log T)$ Regret Bound}

The analysis proceeds fairly similarly to the previous section.
We start by observing that \Cref{lem:single-evidence-set} still holds. However, we have the crucial difference that besides the corrupted rounds, we are adding other couples such that $|\langle u_i,\theta^\star\rangle - p_i| > \epsilon$. In particular, this happens in exploration rounds where  $x_t=0$.
Despite that,  \Cref{lem:safe-commit}, \Cref{lem:density-rounds}, \Cref{lem:evidence-growth} still hold.

The crucial difference is that now we cannot trivially bound the number of non correct evidence sets with the corruption $C$. Despite that, we are able to prove the following weaker bound.

\begin{lemma}
\label{lem:standard-evidence-rounds}
In \Cref{alg:standard-pricing}, the number of non-correct evidence sets is upper bounded by $ \bar C$.
\end{lemma}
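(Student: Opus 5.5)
Since the evidence sets are disjoint and every sample is inserted into exactly one set, the number of non-correct evidence sets is at most the number of non-correct samples ever inserted. In \Cref{alg:standard-pricing} a sample is inserted only in exploration rounds, so it suffices to bound the number of exploration rounds $t$ in which $(u_t,m_t)$ is not correct, i.e., in which $|\langle\theta^\star,u_t\rangle-m_t|>\epsilon$, so that $x_t=0$. Note that correctness depends only on the true $v_t$ and not on the observed feedback. A corrupted round therefore produces an incorrect sample only if $x_t=0$, and it is enough to count all exploration rounds with $x_t=0$, whether corrupted or not.

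\textbf{Step 1: these are density-update rounds.} In \Cref{alg:standard-pricing}, \textsc{Update-Density} is invoked in every exploration round. Take $\mathcal T$ to be the set of all exploration rounds; this is precisely the situation \Cref{lem:density-rounds} was stated to cover, since the update may also run when $x_t=1$. The proof of \Cref{lem:density-rounds} does not use the proximity feedback $X_t$. It only uses the following facts:
\begin{itemize}
\item on an uncorrupted round with $x_t=0$, the true sale bit $y_t=Y_t$ identifies the side of the window containing $\mathcal B^\star$, so $\Phi$ grows by the factor $3/2$;
\item on an uncorrupted round with $x_t=1$, the ball $\mathcal B^\star$ intersects the $\epsilon/2$-window around $m_t$;
\item on a corrupted round, $\Phi$ shrinks by at most the factor $1/2$.
\end{itemize}

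\textbf{Step 2: the case $x_t=1$.} Here I must check that $\Phi$ never decreases. The worry is that the $\epsilon/2$-ball $\mathcal B^\star$ might be only partially inside the window, so that part of it is multiplied by $1/2$. If $|v_t-m_t|\le\epsilon$ and $\theta\in\mathcal B^\star$, then $\langle u_t,\theta\rangle$ can be as far as $3\epsilon/2$ from $m_t$. Some points of the ball can then land on the demoted side.

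The fix uses the sale bit. On an uncorrupted round, if $y_t=1$ then $v_t\ge m_t$. The promoted region is $\{\langle u_t,\theta\rangle\ge m_t+\epsilon/2\}$, and any $\theta\in\mathcal B^\star$ satisfies $\langle u_t,\theta\rangle\ge v_t-\epsilon/2\ge m_t-\epsilon/2$. So no point of $\mathcal B^\star$ lies in the demoted region $\{\langle u_t,\theta\rangle< m_t-\epsilon/2\}$, and $\Phi$ does not decrease. The case $y_t=0$ is symmetric, using $v_t<m_t$. This monotonicity step is the one point I expect to need care, and it is settled by exactly this observation.

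\textbf{Step 3: counting.} The potential inequality $1\ge\Phi_1(1/2)^C(3/2)^N$ from \Cref{lem:density-rounds} now gives $N\le 3C+20d\log T$ uncorrupted exploration rounds with $x_t=0$. Adding at most $C$ corrupted rounds, the total number of incorrect samples is at most $4C+20d\log T\le\lceil 5C+20d\log T\rceil=\bar C$. By disjointness, at most $\bar C$ evidence sets are non-correct.

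As a cross-check, counting via ``corrupted samples plus uncorrupted far samples'' gives at most $C+(3C+20d\log T)$, which is the same total.
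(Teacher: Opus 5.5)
Your proof is correct and follows the paper's argument: bound the number of incorrect samples by the exploration rounds with $x_t=0$ via \Cref{lem:density-rounds} (taking $\mathcal T$ to be all exploration rounds), then conclude by disjointness of the evidence sets. Two small differences from the paper: your count $4C+20d\log T$ is slightly tighter than the paper's $C+(4C+20d\log T)$, which counts corrupted rounds twice. Also, your Step 2 uses the uncorrupted sale bit to show that no point of $\mathcal B_d(\theta^\star,\epsilon/2)$ is demoted when $x_t=1$, which makes explicit the non-decrease of the potential that the proof of \Cref{lem:density-rounds} leaves implicit.
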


\begin{proof}
We add a non correct sample to an evidence set in two cases: $i)$ at most $C$ times for corrupted rounds, and $ii)$ in exploration rounds in which $x_t=0$. 
This second term is trivially bounded by \Cref{lem:density-rounds}, which provides a $4C+20d\log T$ upper bound. This concludes the proof.
\end{proof}

Hence, we can conclude that,
\begin{itemize}
\item  Thanks to \Cref{lem:safe-commit} and \Cref{lem:standard-evidence-rounds}, the regret in each committing round is $\frac{1}{4T}$,
\item Thanks to \Cref{lem:density-rounds}, the number of rounds in which the (not observed) $x_t=0$ is $\mathcal{O}(C+d\log T)$,
\item Thanks to \Cref{lem:evidence-growth}, an argument similar to \Cref{lem:evidence-rounds} shows that the number of rounds in which the (not observed) $x_t=1$ is $\mathcal{O}((\bar C+1)d)$,
\end{itemize}
We now combine the previous results to get the final regret bound.
\begin{theorem}
\label{thm:standard-regret}
Algorithm~\ref{alg:standard-pricing} attains regret:
\[
    R_T=\mathcal O\left(Cd+d^2\log T\right).
\]
\end{theorem}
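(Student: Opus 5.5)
We split the rounds of \Cref{alg:standard-pricing} into commit rounds and exploration rounds, mirroring the proof of \Cref{thm:augmented-regret}. Exploration rounds are further split by the unobserved proximity indicator $x_t$.

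\textbf{Commit rounds.} First we check that the hypothesis of \Cref{lem:safe-commit} holds. A sample is non-correct only in two cases: the round is corrupted, or it is an exploration round with $x_t=0$. The density is updated in \emph{every} exploration round, so the set of such rounds is a valid choice of $\mathcal T$ in \Cref{lem:density-rounds}. That lemma bounds the uncorrupted rounds with $x_t=0$ by $4C+20d\log T$. Adding at most $C$ corrupted rounds gives at most $5C+20d\log T\le\bar C$ non-correct samples. The evidence sets are disjoint, so at most $\bar C$ sets are non-correct; this is \Cref{lem:standard-evidence-rounds}. Note that \Cref{lem:single-evidence-set} uses only correctness of the set, not how its samples were obtained, so it applies unchanged. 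Then \Cref{lem:safe-commit}, applied with this $\bar C$, gives $p_t\in[v_t-1/(4T),v_t]$ in each commit round. The total commit regret is therefore at most $1/4$.

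\textbf{Exploration rounds.} Each exploration round costs at most $2$, so it suffices to count them. The rounds with $x_t=0$ number $\mathcal O(C+d\log T)$ by \Cref{lem:density-rounds}. Every exploration round, whatever $x_t$ is, inserts $(u_t,m_t)$ into some $\mathcal H_j$ that does not satisfy \Cref{eq:explains} for $u_t$. Such a set exists because the algorithm did not commit. \Cref{lem:evidence-growth} caps the number of such insertions into each set at $\mathcal O(d)$. Its proof is purely geometric: it uses only that the inserted context is not predicted, and never uses correctness of the stored prices. With $2\bar C+1$ sets, the total number of exploration rounds is therefore $\mathcal O((\bar C+1)d)$. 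This bound already dominates the $x_t=0$ count.

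\textbf{Combining.} We get
\[
R_T\le \tfrac14+2\cdot\mathcal O\big((\bar C+1)d\big)=\mathcal O\big((5C+20d\log T+1)d\big)=\mathcal O\big(Cd+d^2\log T\big).
\]

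The only delicate point is the circularity between the two arguments. The size of $\bar C$ needed for safe commits depends on the number of far-median rounds. That number must be bounded independently of the evidence-set mechanism, which holds because \Cref{lem:density-rounds} counts density updates regardless of what is inserted into the evidence sets. We also need to confirm that \Cref{lem:evidence-growth} tolerates incorrect samples, which it does since it never inspects the prices.
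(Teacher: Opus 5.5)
Your proof is correct and follows the paper's argument. You use \Cref{lem:density-rounds} to bound the non-correct evidence sets by $\bar C$, so that \Cref{lem:safe-commit} controls commit rounds, and then count exploration rounds at cost at most $2$ each. The only difference is that you bound \emph{all} exploration rounds directly by \Cref{lem:evidence-growth}, since every exploration round inserts into a non-predicting set, whereas the paper splits them by $x_t$ and bounds the $x_t=0$ part with \Cref{lem:density-rounds}; both routes give $\mathcal O((\bar C+1)d)$.
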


\begin{proof}

We decompose the regret according to the type of round: commit, exploration rounds where $x_t=0$, and exploration rounds with $x_t=1$.

By Lemma~\ref{lem:standard-evidence-rounds}, at most $\bar C$ evidence sets are not correct. Hence
Lemma~\ref{lem:safe-commit} applies, and each commit round has regret at most $1/4T$. Therefore, the total
regret of commit rounds is at most $1/4$.

Then, we notice that each exploration round has regret at most $2$ since $\Theta= \mathcal{B}_d(0,1)$ and
$\|u_t\|_2\leq 1$. Hence, to conclude the proof, we just need to bound the number of exploration rounds.

Consider the exploration rounds in which $x_t=0$.
By Lemma~\ref{lem:density-rounds}, the number of those rounds is
$\mathcal O(C+d\log T)$. 

Consider the exploration rounds in which $x_t=1$. Thanks to \Cref{lem:evidence-growth}, an argument similar to \Cref{lem:evidence-rounds} shows that the number of those rounds is
$\mathcal O((\bar C+1)d)$.

Putting the bounds together, we get:
\[
    R_T
    =
    \mathcal O(1)+\mathcal O(C+d\log T)+\mathcal O((\bar C+1)d)
    =
    \mathcal O(Cd+d^2\log T).
\]
This concludes the proof.
\end{proof}

\paragraph{Time Complexity} We conclude the section by discussing the per-round time complexity of Algorithm~\ref{alg:standard-pricing}. Specifically, the running time in each round is dominated by the computation of $i)$ the $\epsilon$-median and $ii)$ Equation~\eqref{eq:explains}. For point $i)$, we refer to the time complexity discussion of~\citep{pmlr-v178-leme22a,leme2026density}, where the authors show that the $\epsilon$-median can be computed in runtime $\mathcal{O}(T^d \cdot\poly(d,T))$. As concerns point $ii)$, we notice the prediction test in Equation~\eqref{eq:explains} is a
convex program. Indeed, if an evidence set contains $k$ entries and
$U=[u_1,\ldots,u_k]$, the test is equivalent to minimizing
$
    \epsilon\|\lambda\|_1+\|u-U\lambda\|_2
$
over $\lambda\in\mathbb R^k$. Since \emph{each} evidence set contains at most $\mathcal O(d)$ entries,
this convex program has dimension $\mathcal O(d)$ and can be solved in
$\poly(d, T)$ time to accuracy $\poly(1/T)$. Thus, the main bottleneck is the density-based procedure of~\citep{pmlr-v178-leme22a,leme2026density}, while excluding the computation
of the $\epsilon$-window median, each round requires
$\mathcal O(\bar C\,\poly(d,T))=\mathcal O(\poly(d,T))$ time.

\section{Open Problems}

Several natural questions remain open. Specifically,
\begin{itemize}
    \item The dependence on the dimension is likely not optimal. Our regret bound is $\mathcal O(Cd+d^2\log T)$, and the extra $d$ factor comes from the way the
standard feedback model is reduced to the proximity one. A natural goal is to approach
$\mathcal O(C+d\log T)$.
\item It would be interesting to develop algorithms that run in time which is polynomial in the instance, namely, it runs in $\poly(T,d)$. Since, as previously discussed, the $\mathcal{O}(T^d)$ complexity derives from the computation of the $\epsilon$-median, we believe that developing polynomial time algorithms would require departing from density-based approaches.
\item Our algorithms require an upper bound on the corruption budget $C$. Removing this
assumption and designing algorithms that adapt to an unknown amount of corruption, while still
separating $C$ from $T$, is an important direction for future work.
\end{itemize}


\bibliographystyle{alpha}
\bibliography{example_paper.bib}

\end{document}